\documentclass[10pt]{article}
\usepackage[utf8]{inputenc}
\usepackage{graphicx}
\usepackage{color}
\usepackage{graphicx}
\usepackage{url}
\usepackage{color,soul}
\usepackage{float}
\usepackage{amsmath}
\usepackage{amssymb}
\usepackage{courier}
\usepackage{multicol}
\usepackage[abs]{overpic}
\usepackage{enumitem}
\usepackage{multicol}
\usepackage{amsthm}
\usepackage{changepage}
\usepackage{framed}
\usepackage{mdframed}
\usepackage{lipsum}
\usepackage{longtable}

\usepackage[top=2.5cm, bottom=2.5cm, left=2.5cm, right=2.5cm]{geometry}
\usepackage{tikz}
\newtheorem{theorem}{Theorem}
\newtheorem{lemma}[theorem]{Lemma}
\newtheorem{claim}[theorem]{Claim}

\usepackage[linesnumbered,boxed]{algorithm2e}
\allowdisplaybreaks
\usepackage{pgf,tikz}
\usepackage{mathrsfs}
\usetikzlibrary{arrows}
\usepackage{times}
\DeclareMathSizes{10}{9}{6}{5}

\newcommand{\hide}[1]{}

\def\EE{\mathbb E}

\def\RR{\mathbb R}

\def\SS{\mathbb S}

\def\mI{\mathscr I}

\def\cH{\mathcal H}
\def\cI{\mathcal I}

\def\cS{\mathcal S}

\def\cR{\mathcal R}
\def\cL{\mathcal L}
\def\cM{\mathcal M}

\def\cH{\mathcal H}

\def\cV{\mathcal V}

\def\b1{\mathbf 1}

\def\cI{{\mathcal I}}

\def \OPT {\textsc{Opt}}
	
\def \RS {\textsc {RectStab}}
\def \HSS {\textsc {HorizSegStab}}	
\def \VSS {\textsc {VertSegStab}}
\def \SS {\textsc {SegStab}}
\def \USS {\textsc{UnitSqrStab}}
\def \stabs {\boxplus}

\setlength\arraycolsep{2pt}

\newcommand{\raf}[1]{(\ref{#1})}

\title{Threshold Rounding for the Standard LP Relaxation of some Geometric Stabbing Problems}

\author{
	Khaled Elbassioni\thanks{Khalifa University of Science and Technology, SAN Campus, Abu Dhabi, UAE;
		(khaled.elbassioni@ku.ac.ae)}
	\and
	Saurabh Ray\thanks{New York University, Abu Dhabi, UAE;
	(saurabh.ray@nyu.edu) }
}
\date{}

\begin{document}

\maketitle

\begin{abstract}
    In the rectangle stabbing problem, we are given a set $\cR$ of axis-aligned rectangles in $\RR^2$, and the objective is to find a minimum-cardinality set of horizontal and/or vertical lines such that each rectangle is intersected by one of these lines. The standard LP relaxation for this problem is known to have an integrality gap of 2, while a better intergality gap of 1.58.. is known for the special case when $\cR$ is a set of horizontal segments. In this paper, we consider two more special cases: when $\cR$ is a set of horizontal and vertical segments, and when $\cR$ is a set of unit squares. We show that the integrality gap of the standard LP relaxation in both cases is stricly less than $2$. Our rounding technique is based on a generalization of the {\it threshold rounding} idea used by Kovaleva and Spieksma (SIAM J. Disc. Math 2006), which may prove useful for rounding the LP relaxations of other geometric covering problems.
\end{abstract}
\section{Introduction}
In this paper, we study two simple geometric covering problems both of which are special cases of the rectangle stabbing problem defined as follows: given a set $\mathcal{R}$ of axis-parallel rectangles in the plane, find the smallest set $\cS$ of horizontal or vertical lines so that each rectangle $r \in \mathcal{R}$ is intersected ({\em stabbed}) by at least one of the lines in $\cS$. 
This version of the problem is called the {\em continuous} version since the set $\cS$ can be any set of vertical and horizontal lines. In the {\em discrete} version of the problem, we are given a set of lines $\mathcal{L}$ and the solution set $\cS$ is required to be a subset of $\mathcal{L}$. Note that the discrete version is more general since in the continuous case we can always restrict to the set of lines that pass through one of the sides of one of the rectangles. The best approximation algorithm known for this problem is a $2$-approximation algorithm due to Gaur et. al.~\cite{GIK02} via LP-rounding. Ben-David et al.~\cite{BGMS12} 
show that there cannot be a better rounding of the LP-relaxation used in~\cite{GIK02} and hence the LP-relaxation has an integrality gap of $2$.
 Briefly, the idea is the following. We first solve the natural LP-relaxation in which we associate a value $x_\ell \in [0,1]$ with each line $\ell \in \mathcal{L}$ and minimize the sum of the values of the lines in $\mathcal{L}$ under the constraint that for each rectangle $r \in \mathcal{R}$, the total value of all lines intersecting $r$ is at least $1$. We then classify rectangles into two classes based on the solution to the LP. A rectangle $r$ is type-$h$ if the total value of the horizontal lines in $\mathcal{L}$ intersecting $r$ is at-least $0.5$ and type-$v$ otherwise. 
We solve two independent problems - one in which we stab the type-$h$ rectangles using the smallest number of horizontal lines in $\mathcal{L}$ and another in which we stab the the type-$v$ rectangles using the smallest number of vertical lines in $\mathcal{L}$. These two problems can be solved optimally since they are essentially problems of stabbing intervals with points, and union of these two solutions is easily shown to be of size at most twice the value of the LP-solution $\sum_{\ell \in \mathcal{L}} x_\ell$. The factor $2$ in the approximation algorithm here stems from the fact that the lines have two orientations - horizontal and vertical. A similar idea can be used to obtain a $2$-approximation for several other packing/covering problems including the ones we study in this paper. For many of these, the  $2$-approximation is the best that is known (in polynomial time), and in fact it is not known whether the natural LP has a smaller integrality gap.
Some simple examples of such problems are the following:
\begin{itemize}
    \item {\em Hitting segments with points.} Given a set of horizontal and vertical line segments in the plane, find the smallest number of points in the plane that {\em hit} (intersect) all the segments.
    \item {\em Covering points with segments.} Given a set of points in the plane and a set of horizontal or vertical segments whose union covers all the points, find the smallest subset of the segments that also cover all the points.
    \item {\em Independent set of segments.} Given a set of horizontal and vertical segments in the plane, find the largest subset of the segments that are pairwise non-intersecting. 
\end{itemize}
It is a natural to ask whether the integrality gap of the standard LP relaxation for these problems is smaller than $2$. An interesting case where an integrality gap smaller than $2$ was shown is  a special case of the rectangle stabbing problem where all the rectangles have height $0$, i.e., the rectangles are horizontal line segments. Kovaleva and Spieksma \cite{KS06} showed that the standard LP relaxation has an integrality gap of exactly $e/(e-1) \approx 1.582$. We give a simpler proof of the upper bound of $e/(e-1)$ and show that the technique can be extended to beat the obvious bound of $2$ on the integrality gap for the following extremal special cases~\footnote{one extremal special case which we were not able to prove an integrality gap smaller than $2$ is where all rectangles are squares but possibly of different sizes} of the rectangle stabbing problem: i) all rectangles have either width $0$ or height $0$, i.e., they are either horizontal or vertical segments and, ii) all rectangles are unit-sized squares. We show an integrality gap of at most $1.935$ for the first problem and at most $1.98\overline{3}$~\footnote{In the appendix, we outline a proof showing that this bound can be improved to $1.9\overline{3}$. We believe that further improvement may be possible. However, the main focus in this paper is simply to prove a bound that is strictly less than $2$.} for the second problem. Our proofs are constructive and yield simple deterministic algorithms to round the LP-solution with the same approximation ratio. These are also currently the best polynomial time approximation algorithms known for these problems. 
Numerically, these are not great improvements but technically it seems non-trivial to prove any bound below $2$. We believe that our approach could lead to better techniques to round LPs for basic geometric packing and covering problems. 
For hardness results and other variants related to the rectangle stabbing problem, see~\cite{CDSW18,DFRS12,ELRSSS08,XX05}.
\subsection{Related Work}
Packing and covering problems are very well studied in computational geometry. Broadly, there are three main algorithmic techniques: LP-rounding, local search and separator based methods.

The technique of Bronimann and Goodrich~\cite{BG95} reduces any covering problem to an $\epsilon$-net question so that if the set system admits an $\epsilon$-net of size $\frac{1}{\epsilon} \cdot f(\frac{1}{\epsilon})$, then we obtain an LP-relative approximation algorithm with approximation factor $f(\OPT)$ where $\OPT$ is the size of the optimal solution. Since set systems of finite VC- dimension admit $\epsilon$-nets of size $O(\frac{1}{\epsilon} \log \frac{1}{\epsilon})$, this implies an $O(\log \OPT)$ approximation algorithm for covering problems involving such set systems. Similarly, for set systems with low shallow cell complexity, we obtain algorithms with correspondingly small approximation factors (see~\cite{V10, chan2012weighted}). 
In particular if the shallow cell complexity is linear, we obtain constant factor approximation algorithms.  Varadarajan~\cite{V10} showed via the \emph{quasi-uniform} sampling technique how these results can be made to work in the weighted setting. His technique was optimized by Chan et. al.~\cite{chan2012weighted} who also introduced the notion of \emph{shallow cell complexity} generalizing the notion of {\em union complexity} from geometric set systems to abstract set systems. 
Some of these algorithms have also been extended to work in the multicover setting (see \cite{ChekuriCH12}, \cite{BansalP16}). 
One limitation of the approach in \cite{BG95} is that even for simple set systems with linear shallow cell complexity, the lower bound on the size of the $\epsilon$-net may involve a large constant factor which then translates into a lower bound on the  approximation ratio of the corresponding rounding algorithm. For the independent set problem Chan and Har-Peled showed that LP-rounding can be used to obtain a constant factor approximation for pseudodisks. For rectangles in the plane, Chalermsook and Chuzhoy gave an $O(\log \log n)$ approximation algorithm. Recently, Chalermsook and Walczak~\cite{CW21} extended this to work for the weighted setting as well. Chan~\cite{Chan12} obtained an $O(n^{0.368})$-approximation algorithm for the {\em discrete} independent set problem with axis parallel rectangles. 

The local search framework, where one starts with any feasible solution and tries to improve the solution by only making constant size {\em swaps} (i.e., adding/removing a constant number of elements from the solution), yields a PTAS for several packing and covering problems (see e.g.~\cite{MR10, ChanH12, Aschner2013, GibsonP10, RR18, BasuRoy2018}). This framework also has major limitations: it is not as broadly applicable as the LP-rounding technique (in particular it works only for the unweighted setting so far), often hard to analyse, and the PTASes they yield have a  running time like $n^{O({1/\epsilon^2})}$ with large constants in the exponent, making them  irrelevant for practical applications. 

The third type of algorithms consists of separator based methods where some kind of {\em separator} is used to split the problem instance into smaller problems which can be solved independently and combined to obtain an approximate solution. Hochbaum and Maass~\cite{HM87} used this idea to obtain PTASes for several packing and covering problems. More recently, Adamaszek and Wiese~\cite{AW13, AW14} have used this kind of idea for obtaining a QPTAS for independent set problems. Mustafa et. al.~\cite{mustafa2015quasi} extend the idea of Adamaszek and Wiese to obtain a QPTAS for weighted set cover problem with pseudodisks in the plane and halfspaces in $\mathbb{R}^3$. 

Unfortunately none of these broad techniques yield better than a $2$-approximation algorithm (in polynomial time) even for the special cases of the rectangle stabbing problem that we consider. There are simple examples showing that the solution returned by the standard local search algorithm can be arbitrarily bad. Moreover, it is not clear whether any of the separator based techniques can be adapted to work for this problem.

\hide{
Several geometric problems, mostly of the covering type, have the property that their standard LP relaxation can be rounded within a factor of 2 with a simple algorithm, but no matching lower bound is known on the intergality gap. We mention a few examples: covering points in the plane with horizontal/vertical line segments, and the dual problem;... Understanding the standard LP relaxation for such problems seems to be natural and well-motivated question.
}

\medskip
\subsection{Problem definition and preliminaries}
In this paper, we consider special cases of the following problem:

\begin{description}
\item \RS$(\cR,\cH,\cV,w)$: given a set $\cR$ of axis-aligned rectangles in $\RR^2$, a set $\cH$ of horizontal lines, a set $\cV$ of vertical lines, and non-negative weights $w:\cH \cup \cV \to\RR_+$, find a minimum-weight collection of lines $\cL\subseteq\cH\cup\cV$ such that each rectangle in $\cR$ is stab (intersected) by at least one line from $\cL$.
\end{description}

In the {\it unweighted} version, $w_\ell=1$ for all $\ell\in\cH\cup\cV$, and in the {\it continuous} version, $\cH$ and $\cV$ are the sets of {\it all} horizontal and vertical lines in the plane, respectively. It is natural to assume that the continuous version is unweighted. By considering the two sets of intervals obtained by projecting the rectangles along the horizontal (resp., vertical) axis, and adding a vertical (resp., horizontal) line for each maximal clique (that is, maximal set of intersecting intervals), we may assume further that the given sets of lines are {\it discrete}.

The special cases of \RS\ in which $\cR$ is a set of horizontal (resp., vertical) line segments, axis-aligned line segments, and unit squares will be denoted by \HSS\ (resp.,\VSS), \SS, and \USS, respectively. 

In the standard LP relaxation for \RS, we assign a variable $x_i$ to each vertical line $i\in\cV$, a variable $y_i$ to each horizontal line $j\in\cH$, and demand that, for each rectangle $r\in\cR$, the sum of the variables corresponding to the lines $i\in\cH\cup\cV$ stabbing $r$ (denoted henceforth as $i\stabs r$) is at least 1:

\begin{align}\label{LP-RS}
z^*:=&\min\quad \sum_{i\in\cV}w_ix_i+\sum_{j\in\cH}w_jy_j\\
\text{s. t.} &\sum_{i\in\cV:~i\stabs r}x_i+\sum_{j\in\cH:~j\stabs r}y_j\ge 1,\quad \forall r\in \cR\label{LP-RS-1}\\
&x_i\ge 0,\quad \forall i\in\cV, \qquad y_j\ge 0,\quad \forall j\in\cH.\label{LP-RS-2}
\end{align}

Note that if one of $\cH$ or $\cV$ is empty, the above LP relaxation is {\it exact} due to {\it total unimodularity} of the (interval) constraint matrix.
Gaur et al. \cite{GIK02} showed that the integrality gap of the relaxation~\raf{LP-RS}-\raf{LP-RS-2} is at most 2 via the following simple rounding algorithm. Given an instance \RS$(\cR,\cH,\cV,w)$, let $(x^*,y^*)$ be an optimal solution to the relaxation. Define two new instances: $\cI_1:=$\RS$(\{r\in\cR:~\sum_{i\in\cV:~i\stabs r}x_i^*\ge\frac12\},\emptyset,\cV,w)$ and $\cI_2:=$\RS$(\{r\in\cR:~\sum_{j\in\cH:~j\stabs r}y_j^*\ge\frac12\},\cH,\emptyset,w)$. Since the two instances $\cI_1$ and $\cI_2$ have an integrality gap of 1, they can be efficiently solved to optimality to yield two sets of lines $\cH'\subseteq\cH$ and $\cV'\subseteq\cV$ whose union is a 2-approximation for the original instance of \RS. Ben-David et al. \cite{BGMS12} proved that this is essentially tight.

For the special case of \HSS\ (or \VSS), Kovaleva and Spieksma \cite{KS06} showed that the integrality gap of the relxation~\raf{LP-RS}-\raf{LP-RS-2} is exactly $\frac{e}{e-1}=1.58..$. They achieve the upper bound for any instance \HSS\ $(\cR,\cH,\cV,w)$ as follows. Given an optimal solution $(x^*,y^*)$ to the LP-relaxation, consider the components of $y^*$ in a non-decreasing order, say w.l.o.g.,  $y_1^*\le y_2^*\le\cdots\le y_m^*$, where $m:=|\cH|$. Define $m+1$ new instances $\cI_k:=$\RS$(\cR_k,\emptyset,\cV,w)$, for $k=1,\ldots,m+1$, where $\cR_k:=\{r\in\cR:~j\stabs r\text{ for some }j< k\}$. The $k$th such instance corresponds to rounding $y_j$ to $1$ if and only if $y_j^*\ge y_k^*$, where we assume $y_{m+1}^*=1$. Since $\frac{x^*}{1-y_k^*}$ is a feasible solution for the relaxation of $\cI_k$ (which has an integrality gap of 1), it follows that the total weight of the rounded solution of the given instance \HSS$(\cR,\cH,\cV,w)$ is upper-bounded by
\begin{align}\label{KS-UB}
    \min_{k=1,\ldots,m+1}\left(\sum_{j=k}^mw_j+\frac{w(x^*)}{1-y_k^*}\right),
\end{align}
where, for brevity, we write $w(x):=\sum_{i\in\cV}w_ix_i$ and $w(y):=\sum_{j\in\cH}w_jy_j$.
Kovaleva and Spieksma prove (by induction) that the expression in~\raf{KS-UB} is bounded from above by $\frac{e}{e-1}\left(w(x^*)+w(y^*)\right)$. 

\paragraph{Threshold rounding.} A slightly simpler (and probably more intuitive) way to see the result in \cite{KS06} is the following.
Fix a constant $\rho\in(0,1)$ (to be determined later). Given an optimal solution $(x^*,y^*)$ to the LP-relaxation of \HSS$(\cR,\cH,\cV,w)$, choose a {\it threshold} $\tau$ randomly in $[0,\rho]$. For $j=1,\ldots,m$, round $y_j$ to 1 if and only if $y_j^*\ge\tau$. Let $\widehat y$ be the rounded solution, $k$ be the smallest index such that $y_k^*\ge\tau$ (where we assume as before that $y^*_{m+1}=1$), and $\widehat x$ be an optimal integral solution of the instance $\cI_k$. Then $(\widehat x,\widehat y)$ is a feasible solution for \HSS$(\cR,\cH,\cV,w)$ with objective
\begin{align}\label{HS-ThR}
w(\widehat x)+w(\widehat y)\le\sum_{j:~y_j^*\ge\tau}w_j+\frac{w(x^*)}{1-\tau},
\end{align}
since $\frac{x^*}{1-\tau}$ is feasible for the relaxation of $\cI_k$.
Taking the expectation over all possible values of $\tau$,
\begin{align}\label{HS-ThR-Exp}
   \EE[w(\widehat x)+w(\widehat y)]&\le
   \sum_jw_j\Pr[y_j^*\ge\tau]+
   \frac{1}{\rho}\int_0^{\rho}\frac{w(x^*)}{1-\tau}d\tau
   =\frac{1}{\rho}w(y^*)+\frac{1}{\rho}\ln\left(\frac{1}{1-\rho}\right)w(x^*).
\end{align}
By choosing $\rho=1-\frac1e$, to balance the coefficients of the two terms in \raf{HS-ThR-Exp}, we get $\EE[w(\widehat x)+w(\widehat y)]\le\frac{e}{e-1}\left(w(x^*)+w(y^*)\right)$.

Kovaleva and Spieksma~\cite{KS06} also give tight examples where the integrality gap can be made arbitrarily close to $e/(e-1)$.\\

In this paper we extend the threshold rounding technique to show an integrality gap of less than $2$ for the standard LP relaxations 
of \SS and \USS.
We need to recall the following definitions. 

\medskip

\noindent{\bf $(\ell,\delta)$-Shifted partition.}
Given a line segment $L$ of length $|L|$ and a parameter $\ell>0$, consider a partition of $L$ into $k=\lceil|L|/\ell\rceil$ consecutive segments $L_1,\ldots,L_k$, each, except possibly the last, having length $\ell$.  For $\delta>0$, we shift the start of each interval $L_i$ by the same amount $\delta$, wrapping around $L_k$ such that the resulting intervals still partition $L$.  We call such a  configuration of intervals an $(\ell,\delta)$-shifted partition of $L$. Given any sub-interval $I$ of $L$, we say that $I$ is {\it crossed} by the partition if $I$ contains an end-point of one of the segments $L_1, \cdots, L_k$ in the partition in its interior.

\medskip

\noindent{\bf $\epsilon$-net.} Let $V$ be a given set of points on the line with non-negative ``values" $x_v$ and ``weights" $w_v$, for $v\in V$, and $\cI$ be a set of intervals over $V$. For any parameter $\epsilon>0$, an $\epsilon$-net (w.r.t. $(V,\cI,x)$) is a subset $S\subseteq V$ such that $S\cap I\ne\emptyset$ for every interval $I\in\cI$ with $x(I)\ge \epsilon$, where $x(I):=\sum_{v\in I}x_v$. It is well-known that there always exists an $\epsilon$-net $S$ with total weight $\sum_{v\in S}w_v \le\sum_{v\in V}w_vx_v/\epsilon$. Indeed, assume that the points are given in the order $v_1,\ldots,v_n$ on the line, and let us consider the line segment $L$ obtained by stacking together consecutive intervals $I_{v_1},\ldots,I_{v_n}$ of respective lengths $x_{v_1},\ldots,x_{v_n}$. Choose $\delta\in[0,\epsilon]$ randomly, consider an $(\epsilon,\delta)$-partition of $L$, and define  $S:=\{v\in V:~I_v \text{ is crossed by the partition}\}$. Clearly, $S$ is (almost surely) an  $\epsilon$-net, and moreover, $\Pr[v\in S]=x_v/\epsilon$, implying that $\EE[\sum_{v\in S}w_v]=\sum_{v\in V}w_vx_v/\epsilon$.
Note that an optimal (i.e., with smallest total weighted value ) $\epsilon$-net can be obtained (in deterministic polynomial time) by solving the instance \RS$(\{I\in\cI: x(I)\ge\epsilon\},\cV,\emptyset,w\})$, where $\cV$ is the set of vertical lines through the points in $V$.

\section{Segment Stabbing}
We now show that $\SS$ has an integrality gap of less than 2.

{\bf Algorithm.} We fix two thresholds $\tau_x$ and $\tau_y$ in $(0,1)$. Let $(x^*, y^*)$ be the optimal solution to the LP relaxation. We round any $x^*_i \geq \tau_x$ to $1$ and similarly we round any $y^*_j \geq \tau_y$ to $1$. Now consider any horizontal segment which has not yet been hit by the segments chosen due to this rounding. This means that the LP value of the horizontal line containing the segment is  less than $\tau_y$, implying that total LP value of all the vertical lines hitting this segment is at least $1-\tau_y$. Similarly, for any vertical segment not yet hit by the segments chosen due to rounding, the total LP value of all the horizontal lines hitting the segment is at least $1-\tau_x$. Let $X^*$ denote the contribution to the LP objective by the vertical segments not yet rounded and similarly let $Y^*$ denote the contribution of the horizontal segments not yet rounded to the LP objective. We hit the remaining horizontal segments with a {\em vertical} $(1-\tau_y)$-net of weight at most $X^*/(1-\tau_y)$. Similarly, we hit the remaining vertical segments with a {\em horizontal} $(1-\tau_x)$-net of weight at most $Y^*/(1-\tau_x)$. 

It remains to specify how the values $\tau_x$ and $\tau_y$ are chosen. We first pick a $\tau$ from a distribution\footnote{The reader may find the distribution on $\tau$ a bit ad-hoc. In fact, we have obtained this by trial and error. More complicated distributions yield a slightly better bound but for clarity we chose the simplest distribution that suffices.} defined in the interval $[\alpha, \beta]$ where $\alpha = 0.25$ and $\beta = 0.45$ as follows. The density at any $\tau \in [\alpha, \beta]$ is $\rho(\tau) = \frac{2(\tau-\alpha)}{(\beta-\alpha)^2}$. We then set $\tau' = h(\tau) := 1 - (1-\beta)^2/(1-\tau)$. We note that for any  $\tau \in [\alpha, \beta]$, $\tau' \geq \tau$ and as $\tau$ increases from $\alpha$ to $\beta$, $\tau'$ decreases from  $\gamma := h(\alpha) = 0.59\overline{6}$ to $\beta$.
Finally, we either set ($\tau_x = \tau$ and $\tau_y = \tau'$) or ($\tau_x = \tau'$ and $\tau_y = \tau$) with equal probability.  

{\bf Analysis.}
For any fixed $\tau_x$ and $\tau_y$,
the total weight of the solution obtained by the algorithm described above is at most 
\begin{align}
 &   \sum_{i \in \cV: x^*_i \ge \tau_x} w_i 
+ \frac{X^*}{1-\tau_y} 
+ \sum_{j \in \cH: y^*_j \ge \tau_y} w_j  + \frac{Y^*}{1-\tau_x} \\  
=&
\sum_{i \in \cV: x^*_i \ge \tau_x} w_i 
+ \frac{1}{1-\tau_y} \sum_{i \in \cV: x^*_i < \tau_x} w_ix^*_i
+ \sum_{j \in \cH: y^*_j \ge \tau_y} w_j  + \frac{1}{1-\tau_x} \sum_{j \in \cH: y^*_j < \tau_y} w_jy^*_j .
\end{align}

Consider any vertical line $\ell$ with LP value $x^*_\ell$. If $x^*_\ell \ge \tau_x$, it contributes $w_\ell$ to the solution in the first term of the above sum.  Otherwise, it contributes $\frac{w_\ell x_\ell^*}{1-\tau_y}$ in the second term of the sum. We can thus write the contribution of the line $\ell$ as $\lambda_\ell \cdot w_\ell x^*_\ell$ where $\lambda_\ell = 1/{x^*_\ell}$ if $x^*_\ell \geq \tau_x$ and $1/(1-\tau_y)$ otherwise. 

For any fixed choice of $\tau \in [\alpha, \beta]$, the expected value of $\lambda_\ell$ is $\mu(x^*_\ell, \tau)$ where $\mu(z,\tau)$ is defined as

\begin{align}
     {\mu(z,\tau)} =\left\{\begin{array}{ll}
    \frac1{2}\left(\frac1{1-\tau}+\frac1{1-\tau'}\right) &\text{ if } z <\tau\\
     \frac1{2}\left(\frac1{z} + \frac1{1-\tau}\right) &\text{ if } \tau \leq z <\tau'\\
     \frac1{z}  &\text{ if } z >\tau',
    \end{array}
    \right.
\end{align}
where we recall that $\tau' = h(\tau)$.
The expected value of $\lambda_\ell$ over the choices of $\tau$ is therefore $\overline\mu(x^*_l)$ where $\overline\mu(z) = \int_{\alpha}^{\beta} \mu(z,\tau)\,\rho(\tau)\,d\tau$.
We show in the lemma below that for any $z \in [0,1]$, $\overline\mu(z) < 1.935$. This implies that in expectation, the contribution of any vertical line $\ell$ to the solution is at most $1.935 \cdot w_\ell x_\ell^*$. An analogous proof shows that an analogous claim holds for any horizontal line $\ell$.

\begin{lemma}
For any $z \in [0,1]$, $\overline\mu(z) < 1.935$. 
\end{lemma}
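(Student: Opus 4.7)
The plan is to split the $\tau$-integration range $[\alpha,\beta]$ according to which branch of the piecewise definition of $\mu(z,\tau)$ is active, then integrate in closed form. Since $\tau'=h(\tau)$ decreases continuously from $\gamma:=h(\alpha)=0.59\overline{6}$ to $\beta$ as $\tau$ ranges over $[\alpha,\beta]$, there are four natural regimes of $z$: (i) $z\le\alpha$, where only the first branch of $\mu$ is active for every $\tau$; (ii) $\alpha\le z\le\beta$, where the second branch is active on $\tau\in[\alpha,z]$ and the first on $\tau\in[z,\beta]$; (iii) $\beta\le z\le\gamma$, where the second branch is active on $\tau\in[\alpha,\tau^\ast(z)]$ and the third on $\tau\in[\tau^\ast(z),\beta]$, with $\tau^\ast(z):=1-(1-\beta)^2/(1-z)$ being the unique $\tau$ satisfying $h(\tau)=z$; and (iv) $z\ge\gamma$, where only the third branch is active.

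The algebraic simplification that makes the integrals tractable is the identity $1-h(\tau)=(1-\beta)^2/(1-\tau)$ built into the choice of $h$, which converts $1/(1-\tau')$ into the polynomial expression $(1-\tau)/(1-\beta)^2$. Paired with the linear density $\rho(\tau)=2(\tau-\alpha)/(\beta-\alpha)^2$, each integrand reduces to a sum of a rational term proportional to $(\tau-\alpha)/(1-\tau)$ (whose antiderivative contributes the logarithm $(1-\alpha)\ln\frac{1-\alpha}{1-\beta}$) and a polynomial in $\tau$. One therefore obtains an elementary closed form for $\overline\mu(z)$ in each of the four regimes. In regimes (i) and (iv) the bound is immediate: $\overline\mu(z)$ is a constant $C_1$ in (i), evaluating to $\approx 1.834$ for $\alpha=0.25$, $\beta=0.45$; and $\overline\mu(z)=1/z\le 1/\gamma\approx 1.676$ in (iv). At the shared endpoint $z=\beta$ between (ii) and (iii), direct substitution gives
\[
\overline\mu(\beta)=\frac{1}{2\beta}+\frac{(1-\alpha)\ln\frac{1-\alpha}{1-\beta}}{(\beta-\alpha)^2}-\frac{1}{\beta-\alpha}\approx 1.927.
\]

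The main obstacle is regime (ii), where the bound is tight: the maximum of $\overline\mu$ on $[\alpha,\beta]$ is attained at an interior critical point $z^\ast\approx 0.41$ with $\overline\mu(z^\ast)\approx 1.933$, leaving a margin of only about $0.002$. I handle it by differentiating the closed form of $\overline\mu$; the first-order condition reduces to the cubic
\[
(z+\alpha)(1-\beta)^2 = 2z^2(1-z),
\]
which has a unique root $z^\ast\in(\alpha,\beta)$ — existence follows from the sign change $\overline\mu'(\alpha^+)>0$, $\overline\mu'(\beta^-)<0$, and uniqueness from the monotonicity of the two sides on $(\alpha,\beta)$. Substituting $z^\ast$ back into the closed form — or, equivalently, using the cubic to eliminate the highest-degree term before evaluating — yields a number that can then be numerically compared to $1.935$. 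Regime (iii) is easier: a parallel derivative computation (using $d\tau^\ast/dz=-(1-\beta)^2/(1-z)^2$) shows that $\overline\mu$ is monotone decreasing on $[\beta,\gamma]$, so its maximum there equals $\overline\mu(\beta)<1.935$ already computed. Putting the four regimes together establishes $\overline\mu(z)<1.935$ throughout $[0,1]$.
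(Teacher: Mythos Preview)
Your proposal is correct and follows essentially the same route as the paper: the same four-regime case split on $z$, the same use of $1/(1-\tau')=(1-\tau)/(1-\beta)^2$ to reduce the integrands to elementary functions, and the same first-order analysis in regimes (ii) and (iii) (your cubic $(z+\alpha)(1-\beta)^2=2z^2(1-z)$ is exactly the paper's condition $f(z)=0$ cleared of denominators, and your $\tau^\ast(z)$ is the paper's $w=h(z)$). The only imprecision is your uniqueness argument in regime (ii): ``monotonicity of the two sides'' alone does not suffice since both $(z+\alpha)(1-\beta)^2$ and $2z^2(1-z)$ are increasing on $(\alpha,\beta)$; you need that their \emph{difference} is monotone (equivalently, that $g_2'(z)=4z-6z^2>(1-\beta)^2=g_1'(z)$ on this interval, which does hold), or argue as the paper does that $f'(z)<0$ throughout.
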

\begin{proof}
We do a case analysis based on the value of $z$. The cases below cover all possible values of $z$ in $[0,1]$ but overlap on the boundaries.
\begin{itemize}
    \item {\bf Case 1}. $z \in  [0,\alpha]$. In this case for any choice of $\tau$, $z < \tau$ and therefore $\mu(z,t) = \frac{1}{2}\left( \frac1{1-\tau} + \frac1{1-\tau'}\right)$. Thus 
    $$ \overline\mu(z) = \int_\alpha^\beta \mu(z,\tau)\,\rho(\tau)\,d\tau
    = \int_{\alpha}^\beta \left(\frac1{1-\tau} + \frac{1-\tau}{(1-\beta)^2} \right) \cdot \frac{\tau-\alpha}{(\beta- \alpha)^2}\,d\tau
   = \int_{\alpha}^\beta \left( A_1\tau^2 + B_1\tau + C_1 + \frac{D_1}{1-\tau} \right) d\tau,
    $$
    where $A_1 = -\frac{1}{(1-\beta)^2(\beta-\alpha)^2},  B_1 = \frac{1+\alpha}{(1-\beta)^2\cdot (\beta-\alpha)^2}, C_1 = -\frac{1}{(\beta-\alpha)^2} - \frac{\alpha}{(1-\beta)^2(\beta-\alpha)^2}$ and $ D_1 = \frac{1-\alpha}{(\beta-\alpha)^2}$.
    Thus,
    $$\overline \mu(z) = \left[A_1\frac{\tau^3}{3} + B_1\frac{\tau^2}{2} + C_1\tau - D_1 \ln (1-\tau) \right]_\alpha^\beta \approx 1.835. $$
    
    \item {\bf Case 2.} $z \in [\alpha, \beta]$. In this case, 
    \begin{align*}
        \overline\mu(z) = \int_\alpha^\beta \mu(z,\tau)\,\rho(\tau)\,d\tau 
        = \frac12\int_\alpha^z \left(\frac{1}{z} + \frac{1}{1-\tau}\right) \rho(\tau)\,d\tau
\ +\ \frac12 \int_z^\beta \left( \frac{1}{1-\tau} + \frac{1-\tau}{(1-\beta)^2} \right)  \rho(\tau)\,d\tau
    \end{align*}
    The derivative of the above expression with respect to $z$ (using Leibniz integral rule) is
    $$\frac12\left(\frac{1}{z} + \frac{1}{1-z}\right) \rho(z) -\frac12\int_\alpha^z\frac{\rho(\tau)}{z^2}\,d\tau - \frac12\left( \frac{1}{1-z} + \frac{1-z}{(1-\beta)^2} \right) \rho(z) 
 = \frac12\left(\frac{1}{z} - \frac{1-z}{(1-\beta)^2}-\frac{z-\alpha}{2z^2} \right) \rho(z). $$
    Note that $f(z):=\frac{1}{z} - \frac{1-z}{(1-\beta)^2}-\frac{z-\alpha}{2z^2}$ has derivative $f'(z)= \frac{1}{(1-\beta)^2}-\frac{1}{2z^2}-\frac{\alpha}{z^3}$, which is non-decreasing in $z$, implying that $f'(z)\le f'(\beta)=\frac{1}{(1-\beta)^2}-\frac{2\alpha+\beta}{2\beta^3}\approx-1.907$, for $z\in[\alpha,\beta]$. Hence, $f(z)$ is decreasing with $f(\alpha)=\frac{1}{\alpha}-\frac{1-\alpha}{(1-\beta)^2}\approx1.521$ and $f(\beta)=\frac{\alpha+\beta}{2\beta^2}-\frac{1}{1-\beta}\approx-0.09$, which in turn implies that $\overline\mu(z)$ has a unique maximum in $[\alpha,\beta]$ at the point $z_0\in[\alpha,\beta]$ satisfying $f(z_0)=0$. It can be be verified that $z_0\approx0.414$. Thus, $\overline\mu(z) \leq \overline\mu(z_0)$ for $z\in[\alpha,\beta]$. Now, 
    \begin{align*} 
    \overline\mu(z_0) & = \frac12\int_\alpha^{z_0} \left(\frac{1}{z_0} + \frac{1}{1-\tau}\right) \rho(\tau)\,d\tau +\ \frac12 \int_{z_0}^\beta \left( \frac{1}{1-\tau} + \frac{1-\tau}{(1-\beta)^2} \right)  \rho(\tau)\,d\tau\\
    &= \int_\alpha^{z_0} \left(\frac{1}{z_0} + \frac{1}{1-\tau}\right) \cdot \frac{\tau-\alpha}{(\beta-\alpha)^2}\,d\tau +\  \int_{z_0}^\beta \left(\frac1{1-\tau} + \frac{1-\tau}{(1-\beta)^2} \right) \cdot \frac{\tau-\alpha}{(\beta- \alpha)^2}\,d\tau\\
    &=\int_\alpha^{z_0} \left(A_2\tau + B_2 + \frac{C_2}{1-\tau}\right) d\tau +\ \int_{z_0}^\beta \left( A_1\tau^2 + B_1\tau + C_1 + \frac{D_1}{1-\tau} \right) d\tau,
    \end{align*}
    where $A_2 =\frac1{z_0\cdot (\beta-\alpha)^2}, B_2 = - \frac{\alpha}{z_0 \cdot (\beta-\alpha)^2} - \frac{1}{(\beta-\alpha)^2}$, $C_2 =\frac{1-\alpha}{(\beta-\alpha)^2}$, and $A_1$, $B_1$, $C_1$, $D_1$ are as above.
    Thus $$ \overline\mu(z_0) = \left[A_2\frac{\tau^2}{2} + B_2\tau - C_2\ln(1-\tau)\right]_\alpha^{z_0}+\left[A_1\frac{\tau^3}{3} + B_1\frac{\tau^2}{2} + C_1\tau - D_1 \ln (1-\tau) \right]_{z_0}^\beta\approx 1.9347.$$ This implies that $\overline\mu(z) < 1.935$ for all $z \in [\alpha, \beta]$.
    
    \item {\bf Case 3.} $z \in [\beta, \gamma]$. In this case, 
   $$ \overline\mu(z) = \int_\alpha^\beta \mu(z,\tau)\,\rho(\tau)\,d\tau = 
\frac12\int_\alpha^{h(z)} \left(\frac{1}{z} + \frac{1}{1-\tau}\right) \rho(\tau)\,d\tau
\ +\ \frac12\int_{h(z)}^\beta \frac{2}{z}\, \rho(\tau)\,d\tau.
$$
Let $w = h(z) \in [\alpha, \beta]$. In terms of $w$, the above expression is 
$$ 
\int_\alpha^{w} \left(\frac{1}{h(w)} + \frac{1}{1-\tau}\right) \rho(\tau)\,d\tau
\ +\ \int_{w}^\beta \frac{2}{h(w)}  \rho(\tau)\,d\tau
$$
The derivative of the above expression with respect to $w$ is 
\begin{align} \label{e1}
&\left(\frac{1}{h(w)} + \frac{1}{1-w}\right)\rho(w)-\frac{h'(w)}{h^2(w)}\int_\alpha^w\rho(t)\, d\tau-\frac{2}{h(w)}\rho(w)-\frac{2h'(w)}{h^2(w)}\int_w^\beta\rho(t)\, d\tau\\ 
&=\left[ \frac{1}{1-w} - \frac{1}{h(w)}+\frac{h'(w)}{2h^2(w)}(w-\alpha)\right] \rho(w)-\frac{2h'(w)}{h^2(w)},\label{e2}
\end{align}
where $h'$ denotes the derivative of $h$. It can be verified that the right-hand side $g(w)$ of \raf{e1}-\raf{e2} is decreasing in $w\in[\alpha,\beta]$, and hence has value at least $g(\beta)\approx0.898$. This implies that the integral is maximized when $w = \beta$, i.e., when $z = \beta$. Similar to  the previous case, we can estimate the integral by
$$\overline\mu(\beta)=\left[A_3\frac{\tau^2}{2} + B_3\tau - C_3\ln(1-\tau)\right]_\alpha^{\beta}
\approx 1.927,$$
where $A_3 =\frac1{\beta\cdot (\beta-\alpha)^2}, B_3 = - \frac{\alpha}{\beta} - \frac{1}{(\beta-\alpha)^2}$ and $C_3 =\frac{1-\alpha}{(\beta-\alpha)^2}$. This implies that $\mu(z) < 1.927$  for all $ z \in [\beta, \gamma]$. 
    
\item {\bf Case 4.} $z \in [\gamma, 1]$.  In this case, 
$$ \overline\mu(z) = \int_\alpha^\beta \mu(z,\tau)\,\rho(\tau)\,d\tau = \int_{\alpha}^{\beta} \frac{1}{z} \cdot \rho(\tau) \,d\tau$$
which is clearly maximized when $z = \gamma$ and $\overline\mu(\gamma) = 1/\gamma \approx 1.68$. Thus, $\overline\mu(z) \leq 1.7$ for all $z \in [\gamma,1]$.
\end{itemize}
The lemma follows.
\end{proof}
The theorem below follows from the above discussion.
\begin{theorem}\label{t1}
The integrality gap of $\SS$ is at most $1.935$.
\end{theorem}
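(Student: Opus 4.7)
The plan is to combine the randomized threshold rounding described just before the theorem with the pointwise bound from the lemma to get an expected rounding factor strictly below~$2$. First I would fix an optimal LP solution $(x^*,y^*)$ of value $z^*=w(x^*)+w(y^*)$ and check that for any deterministic thresholds $(\tau_x,\tau_y)\in(0,1)^2$ the algorithm produces a feasible integral solution of total weight at most the first displayed expression preceding the lemma. The feasibility check is short: a horizontal segment $r$ not stabbed by a rounded horizontal line must lie on a line $j$ with $y_j^*<\tau_y$, so by~\raf{LP-RS-1} the total vertical LP mass on $r$ is at least $1-\tau_y$; hence a vertical $(1-\tau_y)$-net over the uncovered horizontal segments stabs them all, and the $\epsilon$-net remark in the preliminaries gives such a net of weight at most $X^*/(1-\tau_y)$. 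The vertical case is symmetric and yields weight at most $Y^*/(1-\tau_x)$.

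Next I would take expectation over the two sources of randomness: the draw $\tau\in[\alpha,\beta]$ with density $\rho$, and the fair coin determining whether $(\tau_x,\tau_y)=(\tau,\tau')$ or $(\tau',\tau)$, where $\tau'=h(\tau)$. For a single vertical line $\ell$ with LP value $z:=x_\ell^*$, its contribution equals $\lambda_\ell w_\ell z$ with $\lambda_\ell=1/z$ if $z\ge\tau_x$ and $\lambda_\ell=1/(1-\tau_y)$ otherwise. The coin flip symmetrizes this so that, conditional on $\tau$, the expectation of $\lambda_\ell$ is precisely the piecewise expression $\mu(z,\tau)$ from the text, the three cases corresponding to $z$ being below both thresholds, strictly between them, or above both. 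Averaging against $\rho$ yields conditional expected contribution $\overline{\mu}(z)\,w_\ell z$, and the identical computation applies to every horizontal line because $\tau_x$ and $\tau_y$ play symmetric roles.

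Applying the lemma, $\overline{\mu}(z)<1.935$ for every $z\in[0,1]$, so summing over all lines bounds the expected total weight of the rounded solution by $1.935\,z^*$. In particular some realization of $(\tau_x,\tau_y)$ achieves this bound, which proves the gap claim. To make the algorithm deterministic and polynomial time I would enumerate candidate thresholds: the rounding outcome depends on $\tau$ only through the finitely many intervals between consecutive values of the form $x_i^*$, $y_j^*$, $h^{-1}(x_i^*)$, $h^{-1}(y_j^*)$, so one can evaluate the algorithm on each interval and return the best outcome. The one delicate point is not a calculation but the symmetrization: it is important that the two orderings $(\tau_x,\tau_y)=(\tau,\tau')$ and $(\tau',\tau)$ are equally likely, since this is what collapses the per-line multiplier for both orientations to the same symmetric function $\mu(z,\tau)$ covered by the lemma.
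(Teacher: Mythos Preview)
Your proposal is correct and mirrors the paper's argument exactly: the paper's proof of Theorem~\ref{t1} is literally the sentence ``The theorem below follows from the above discussion,'' and your write-up is a faithful synthesis of that discussion (algorithm, per-line multiplier $\mu(z,\tau)$, the lemma, and the derandomization remark). One small point to tighten: to justify the net-weight bound $X^*/(1-\tau_y)$ rather than $\sum_{i\in\cV} w_ix_i^*/(1-\tau_y)$, you should observe that an uncovered horizontal segment is also missed by every rounded \emph{vertical} line, so its vertical LP mass lies entirely among the unrounded vertical lines and the $(1-\tau_y)$-net can be built from those alone.
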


{\bf Remark.} Even though we pick $\tau_x$ and $\tau_y$ randomly in the algorithm above, it can be easily converted to a deterministic algorithm since we only need to consider the thresholds $\tau_x$ and $\tau_y$ among the values of the variables in the optimal LP solution. A similar remark applies to the algorithm given in the next section. 

\paragraph{Limitation of the analysis.} We now show that our analysis of the algorithm described above cannot be improved significantly. We do this by showing that there exist $x^*$ and $y^*$ s.t. no matter what $\tau_x$ and $\tau_y$ are chosen the weight of the solution output by the algorithm is at least $1.89\,(\sum_{i}x^*_i + \sum_j y^*_j )$. We will assume that each line (vertical or horizontal) has weight $1$. Our choice of $x^*$ and $y^*$ will in fact identical, i.e., they will have the same set of values. We will thus only state what the values in $x^*$ are. 

The values in $x^*$ are in $(0, 0.5)$ and their distribution is chosen according to a density  function $f$  defined below (see Figure~\ref{fig:f0}). In other words, the number of values in $x^*$ that lie between $a$ and $a+\delta$ for any $a \in (0,0.5-\delta)$ is proportional to $f(a) \cdot \delta$ for some sufficiently small $\delta$. The function $f$ is defined in $[0,1]$ as follows:
\begin{align*}
     f(t) =\left\{\begin{array}{ll}
     0.5 &\text{ if } x \in (0, 0.2)\\
     18.75x - 3.25 &\text{ if } x \in [0.2, 0.4)\\
     4.25  &\text{ if } x \in [0.4, 0.5)\\
     0 &\text{ otherwise } 
    \end{array}
    \right.
\end{align*}

\begin{figure}
\begin{center}
    \includegraphics[scale=0.75]{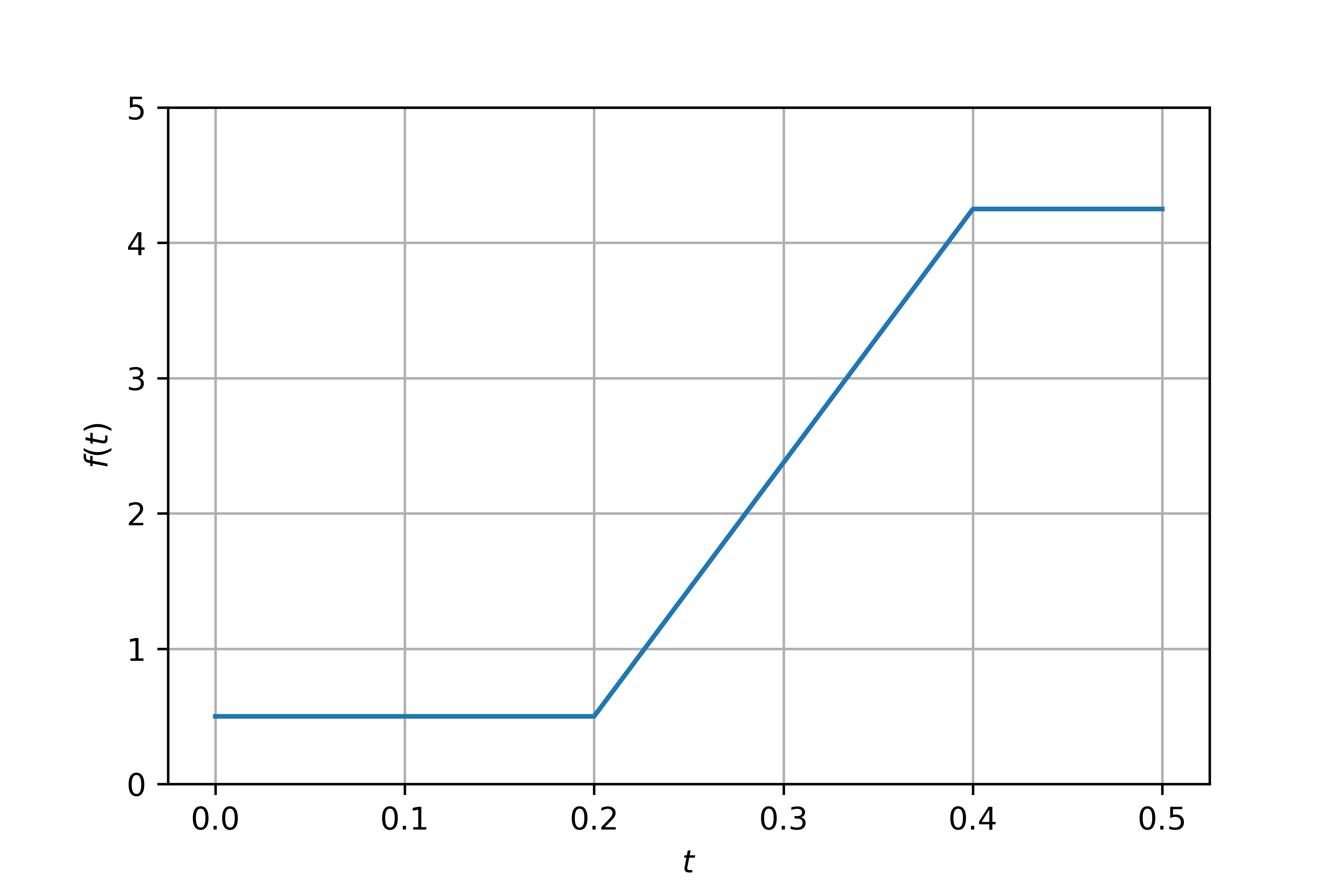}
\end{center}
\caption{The density function $f$ used in the tightness example.}\label{fig:f0}
\end{figure}

More specifically, $x^*$ consists of $\lfloor N\cdot f(\frac{i}{N}) \rfloor$ numbers uniformly distributed in the interval $(\frac{i}{N}, \frac{i+1}{N})$ for each $i \in \{0, \cdots, N-1\}$, where $N$ is a sufficiently large integer.  
For any choice of $\tau_x, \tau_y \in (0,1)$, the 
weight of the solution returned by the algorithm is 

\begin{align*}
W &= |\{i: x^*_i > \tau_x\}| + |\{j: x^*_j > \tau_y\}|  + \frac{1}{1-\tau_x} \sum_{j:x^*_j \le \tau_y} x^*_j + \frac{1}{1-\tau_y}\sum_{i:x^*_i \le \tau_x} x^*_i\\
&= N^2 \cdot \left[ \int_{\tau_x}^1 f(x)\,dx
\ +\  \int_{\tau_y}^1 f(t)\,dt
 +  \frac{1}{1-\tau_x} \int_0^{\tau_y} tf(t)\,dt + \frac{1}{1-\tau_y}\int_0^{\tau_x} tf(t)\,dt \right]\ \pm O(N).
\end{align*}
Recall that $y^*$ is the same as $x^*$. 
Since the ``LP-solution" $(x^*,y^*)$ has value $V = 2\sum_{i} x^*_i = 2N^2 \int_{0}^1 t f(t)\, dt \  \pm O(N)$, the ratio $W/V$
 can be made arbitrarily close to 
$$
\gamma(\tau_x, \tau_y) = 
\frac{1}{2\int_0^1 tf(t)\,dt} \cdot \left[ \int_{\tau_x}^1 f(t)\,dt
\ +\  \int_{\tau_y}^1 f(t)\,dt
 +  \frac{1}{1-\tau_x} \int_0^{\tau_y} tf(t)\,dx + \frac{1}{1-\tau_y}\int_0^{\tau_x} tf(t)\,dt \right]
$$ 
 by choosing a sufficiently large $N$. It can be shown that $\gamma(\tau_x, \tau_y) \geq 1.89$ for any $\tau_x, \tau_y \in [0,1]$. We skip the technical proof since it does not yield much insight and since this can be easily checked numerically.\\
 
 \noindent
 {\bf Remark.} What we have shown above is that our method of analysis cannot yield a bound better than $1.89$.  Note however that this is not necessarily a lower bound on the integrality gap of the LP-relaxation or even on the approximation factor of our algorithm since the $x^*$ we used above need not correspond to the solution of the LP-relaxation for a valid instance of the problem. The currently best lower bound remains the one of $e/(e-1)$ shown in~\cite{KS06}.

\section{Unweighted Continuous Unit Square Stabbing}
It is natural to try to apply the threshold rounding idea to \USS. Given an optimal solution $(x^*,y^*)$ for the LP relaxation~\raf{LP-RS}-\raf{LP-RS-2}, where $\cR$ is a given set of unit squares and $w\equiv1$, we pick two thresholds $\tau_x$ and $\tau_y$ according to some distribution, such that $\tau_x+\tau_y=1$. Let $\cI_x:=\{r_x:~r\in\cR\}$ and $\cI_y:=\{r_y:~r\in\cR\}$ be the two sets of projections of the squares on the horizontal and vertical axes, respectively. We construct a $\tau_x$-net for the set of segments in $\cI_x$ and a $\tau_y$-net for those in $\cI_y$. Since $\tau_x+\tau_y=1$, for any $r\in\cR$, we must have either $x^*(r_x):=\sum_{i\in\cV:~i\stabs r}x_i^*\ge \tau_x$ or  $x^*(r_y)\ge \tau_y$, and thus, the union of the two constructed nets is a feasible solution to the given instance of \USS. It remains to argue about the quality of this solution. If one would use the simple bounds $\sum_{i\in\cV}x_i^*/\tau_x$ and $\sum_{j\in\cH}y_j^*/\tau_y$ on the weights of the nets (used in the previous section), one would arrive at the immediate conclusion that this would not lead to an approximation factor better than 2. Indeed, if $\sum_{i\in\cV}x_i^*=\sum_{j\in\cH}y_j^*:=X$ (e.g., this could be the case in any instance in which the sets $\cI_x$ and $\cI_y$ are isomorphic), then the weight of the obtained solution is $\frac{X}{\tau_x}+\frac{X}{\tau_y}\ge 4X=2z^*$, regardless of which  distribution we pick. To overcome this difficulty, we start by observing that the simple bound on the net, say $\frac{\sum_{i\in\cV}x_i^*}{\tau_x}$, can be tight only for a few values of $\tau_x$. So if we pick $\tau_x\in[0,1]$ randomly, we would expect the average size of an optimal net to be {\it strictly} smaller than $2\sum_{i\in\cV}x_i^*$. We prove that this is indeed the case if the $\cI_x$ is a set of unit intervals.

Let $V$ and $\cI$ be, respectively, a given set of points and a given set of intervals on the (say, horizontal) line. We say that $\cI$ satisfies the {\it continuity} property w.r.t. $V$ if each maximal clique in $\cI$ contains  a point from $V$. For $\tau\in[0,1]$, we denote by $\psi(\tau)=\psi(V,\cI,x,\tau)$ the size of an optimal $\tau$-net w.r.t. $(V,\cI,x)$, and let $\overline{\psi}(V,\cI,x):=\int_0^1\psi(\tau)\, d\tau$.

\begin{lemma}\label{l2}
Let $V$ be a given set of points on the (say, horizontal) line with non-negative values $x_v$, for $v\in V$, and $\cI$ be set of unit intervals on the line satisfying the continuity property w.r.t. $V$. Then $\overline\psi(V,\cI,x)\le \frac{119}{60} x(V)$.
\end{lemma}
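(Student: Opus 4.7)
The plan is to recast $\overline\psi$ as a sum over packing thresholds and then use the unit-length structure to sharpen the trivial mass bound.

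First, since $\psi$ is a non-increasing integer-valued step function on $[0,1]$, partial summation gives the identity
\[
\overline\psi(V,\cI,x)\;=\;\sum_{k\ge 1}\tau_k,\qquad \tau_k := \sup\{\tau\in[0,1] : \psi(\tau)\ge k\}.
\]
By total unimodularity of the interval constraint matrix, $\psi(\tau)$ equals the maximum cardinality of a collection of pairwise $V$-disjoint intervals in $\cI$ each of $x$-mass at least $\tau$ (the LP dual of the minimum hitting set). Summing the masses of such a collection immediately yields the packing bound $\tau_k\le x(V)/k$ for every $k\ge 1$.

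Using only this mass bound, $\overline\psi\le x(V)\cdot H_N$ where $N:=\psi(0^+)$. Since $H_3=\frac{11}{6}=\frac{110}{60}<\frac{119}{60}$, the case $N\le 3$ is immediate. For $N\ge 4$ (where already $H_4=\frac{25}{12}=\frac{125}{60}$ overshoots the target by $\frac{6}{60}$), I would bring in the unit-length and continuity constraints. The crucial structural observation is that a size-$k$ witness packing partitions a subset of $V$ into $k$ groups such that, for each $j$, the first point of group $j+2$ must lie more than distance $1$ from the last point of group $j$ (otherwise the unit interval covering group $j+1$ could not simultaneously contain all of that group and avoid its neighbors). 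This forces the spatial extent of $V$ to grow with $N$ and, together with the basic inequality $\tau_1\le x(V)$ restricted to the heaviest length-$1$ window, strengthens the cascade of bounds $\tau_1,\tau_2,\ldots$ enough to beat $\frac{119}{60}$.

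Quantifying this trade-off as a small optimization over $(\tau_1,\ldots,\tau_N)$ subject to the packing inequalities $k\tau_k\le x(V)$ together with the geometric compatibility constraints imposed by unit length and continuity should yield the precise constant $\frac{119}{60}=2-\frac{1}{60}$. The hardest part is that this constant leaves almost no slack above the trivial bound of $2$, so the structural constraint has to be used tightly. I would expect the extremal configuration to be an explicit arrangement of roughly evenly-spaced, equal-mass $V$-points, and the final constant to emerge from an explicit calculation on that arrangement, together with a tail estimate handling large $N$.
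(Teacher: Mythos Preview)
Your identity $\overline\psi = \sum_{k\ge 1}\tau_k$ and the packing bound $\tau_k\le x(V)/k$ are both correct, and indeed the paper uses precisely this identity (restricted to a bounded segment) in its Claim~2. However, what you have written is a plan, not a proof: the two places where the argument must bite --- the ``geometric compatibility constraints'' and the ``tail estimate handling large $N$'' --- are left entirely unspecified. Since $N=\psi(0^+)$ can be arbitrarily large on the whole line (take $V$ and $\cI$ spread over an arbitrarily long interval), and since $H_N\to\infty$, no constant bound is possible without a genuine additional ingredient. Your structural observation about groups $j$ and $j{+}2$ does not by itself curb $\sum_k\tau_k$: it constrains the geometric spread of $V$, but the $x$-mass can be distributed however one likes over that spread, so the packing inequalities $k\tau_k\le x(V)$ remain essentially the only usable constraints, and those are not enough.

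The paper sidesteps the unbounded-$N$ issue altogether by a random-shift decomposition rather than a global optimization. It partitions the line into length-$5$ blocks via a $(5,\delta)$-shifted partition with $\delta$ uniform in $[0,5]$; within each block at most four pairwise-disjoint unit intervals fit, so $N\le 4$ there, and an LP-duality argument (Claim~2) shows $\overline\psi\le \tfrac{19}{12}\,x$ on each block. Intervals that cross a block boundary are handled separately: each fits inside a length-$2$ window, where a single point hits everything and hence $\overline\psi\le x$ trivially (Claim~1). The random shift makes each point of $V$ fall in some boundary window with probability $\tfrac{2}{5}$, so the correction term costs at most $\tfrac{2}{5}x(V)$ in expectation. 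The constant $\tfrac{119}{60}$ is simply $\tfrac{19}{12}+\tfrac{2}{5}$; it is an artifact of the chosen block length, not the output of an optimization over $(\tau_1,\ldots,\tau_N)$ as you anticipate. Even a completed version of your direct approach would therefore not naturally land on this number.
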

Lemma~\ref{l2} follows from the following two claims.
\begin{claim}\label{cl1}
Let $V\subseteq[0,2]$ be a given set of points on the line with non-negative values $x_v$, for $v\in V$, and $\cI$ be set of unit intervals on $[0,2]$ satisfying the continuity property w.r.t. $V$. Then $\overline\psi(V,\cI,x)\le x(V)$.
\end{claim}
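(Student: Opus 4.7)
The plan is to exploit the very restricted geometry of unit intervals inside a segment of length $2$: any two of them must intersect. Indeed, a unit interval contained in $[0,2]$ has the form $[a,a+1]$ with $a\in[0,1]$, so for any two such intervals $[a_1,a_1+1]$ and $[a_2,a_2+1]$ one has $|a_1-a_2|\le 1$, which forces a nonempty common intersection. Consequently, the whole family $\cI$ is a clique in the intersection graph of intervals, and is therefore (trivially) its own unique maximal clique.

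Applying the continuity property to this maximal clique, I would extract a single point $v^*\in V$ lying in $\bigcap_{I\in\cI} I$. The singleton $\{v^*\}$ then hits every interval of $\cI$, and in particular hits every $I$ with $x(I)\ge\tau$, whatever the threshold $\tau\ge 0$ may be. This immediately gives the uniform bound $\psi(\tau)\le 1$ for all $\tau\in[0,1]$.

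To conclude, set $\tau_1:=\max_{I\in\cI} x(I)$. For $\tau>\tau_1$ there is no interval that needs to be hit, so $\psi(\tau)=0$. Since $V\subseteq[0,2]$ and every $I\in\cI$ satisfies $I\subseteq[0,2]$, we have $\tau_1\le x(V)$. Combining these observations,
\[
\overline\psi(V,\cI,x)=\int_0^1 \psi(\tau)\,d\tau \;\le\; \min(1,\tau_1)\;\le\;\tau_1\;\le\; x(V),
\]
which is the desired bound.

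There is essentially no obstacle here; the only delicate point is to make sure the continuity property is used in the intended sense (``the maximal clique contains a common point of $V$,'' i.e., $V\cap\bigcap_{I\in\cC}I\ne\emptyset$ for each maximal clique $\cC$). Once that reading is fixed, the proof requires no calculation at all, and it also makes transparent why the claim is stated for intervals on a window of length $2$: this is exactly the largest window for which two unit intervals are forced to pairwise intersect, which is what collapses the $\tau$-net problem to a single-point cover.
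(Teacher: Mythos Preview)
Your proof is correct and follows essentially the same route as the paper's. Both arguments use the fact that unit intervals inside $[0,2]$ pairwise intersect, so the continuity property yields a single point $v^*\in V$ hitting all of $\cI$; from there, $\psi(\tau)\le 1$ and $\psi(\tau)=0$ for $\tau>\max_{I\in\cI}x(I)\le x(V)$, giving the bound.
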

\begin{proof}
Since all intervals in $\cI$ are of length 1 and lie completely inside $[0,2]$, by the continuity property, there exists a point $v\in V$ such that $v\in I$ for all $I\in \cI$. It follows that 
$$
\psi(\tau)=\left\{
\begin{array}{ll}
1&\text{ if }\exists I\in\cI:~x(I)\ge\tau,\\
0&\text{ otherwise.}
\end{array}
\right.
$$
Thus, $\int_0^1\psi(\tau)\,d\tau=\max_{I\in\cI}x(I)\le x(V)$.
\end{proof}

\begin{claim}\label{cl2}
Let $V\subseteq[0,5]$ be a given set of points on the line with non-negative values $x_v$, for $v\in V$, and $\cI$ be set of unit intervals on $[0,5]$ satisfying the continuity property w.r.t. $V$. Then $\overline\psi(V,\cI,x)\le \frac{19}{12} x(V)$.
\end{claim}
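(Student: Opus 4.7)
The plan is to reduce Claim~\ref{cl2} to multiple applications of Claim~\ref{cl1} by covering $[0,5]$ with overlapping length-$2$ subsegments, and then to tighten the resulting bound by exploiting the sharper form of Claim~\ref{cl1} together with the structural fact that any antichain of pairwise disjoint unit intervals in $[0,5]$ has size at most $5$ (and is forced to be the canonical partition $\{[i-1,i]\}_{i=1}^{5}$ when it has size exactly $5$).

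Concretely, I would cover $[0,5]$ by the four length-$2$ subsegments $S_1=[0,2]$, $S_2=[1,3]$, $S_3=[2,4]$, $S_4=[3,5]$: every unit interval $[a,a+1]$ with $a\in[0,4]$ lies in at least one $S_i$. For each $i$, set $\cI_i = \{I\in\cI : I\subseteq S_i\}$; after verifying that continuity of $\cI$ w.r.t.\ $V$ induces a suitable continuity property on $(\cI_i, V\cap S_i)$ (possibly by adjoining a virtual point at the common-intersection location of $\cI_i$), Claim~\ref{cl1} gives $\int_0^1 \psi_i(\tau)\,d\tau = \max_{I\in\cI_i} x(I) \le x(V\cap S_i)$. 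Since the union of the four sub-nets is itself a valid $\tau$-net for $\cI$, we obtain
\[
\overline{\psi}(V,\cI,x)\le \sum_{i=1}^{4}\max_{I\in\cI_i} x(I)\le \sum_{i=1}^{4} x(V\cap S_i),
\]
and crude counting (each $V$-point is in at most two $S_i$'s) already gives $\overline{\psi}\le 2\,x(V)$.

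To tighten the factor from $2$ down to $\frac{19}{12}$, I would carry out a case analysis on the profile vector $(X_1,\ldots,X_5)$, where $X_i := x(V\cap[i-1,i])$, using two facts: (i) the sharper bound $\max_{I\in\cI_i}x(I)$ is typically strictly smaller than $x(V\cap S_i)$ --- for instance, when mass is concentrated at a single point, the maxima in the two overlapping $S_i$'s containing it coincide rather than add; and (ii) the structural constraint that an antichain of size $5$ forces the canonical partition, which in turn constrains how large the maxima on intervals straddling two integer boxes can simultaneously be. Equivalently, one could randomize the shift of the decomposition by a uniform $\delta\in[0,1]$ and take expectations, which ``smears out'' the double-counting at the boundaries. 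The hardest step will be extracting the exact constant $\frac{19}{12}$ from this analysis: the tight configurations appear to be those where $V$'s mass concentrates at the centers $0.5, 1.5,\ldots,4.5$, so that each $S_i$ and its neighbor share the same extremal unit interval, and $\frac{19}{12}$ presumably emerges from balancing the resulting inequalities in that single tight sub-case.
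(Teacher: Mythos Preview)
Your decomposition into four length-$2$ windows $S_1,\ldots,S_4$ and the appeal to Claim~\ref{cl1} is sound up to the point where you obtain
\[
\overline\psi(V,\cI,x)\ \le\ \sum_{i=1}^{4}\max_{I\in\cI_i}x(I),
\]
but this inequality is in general \emph{tight at $2\,x(V)$}, not at $\tfrac{19}{12}\,x(V)$. Take $V=\{2.5\}$ with $x_{2.5}=1$ and let $\cI$ contain the intervals $[1.5,2.5]\subseteq S_2$ and $[2.5,3.5]\subseteq S_3$. Then $\max_{I\in\cI_2}x(I)=\max_{I\in\cI_3}x(I)=1$ while $\max_{I\in\cI_1}x(I)=\max_{I\in\cI_4}x(I)=0$, so your upper bound is $2=2\,x(V)$; yet the true value is $\overline\psi=1$. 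The maxima in the two windows do \emph{not} ``coincide'' in any usable sense---they are achieved by different intervals---so the slack you hope to recover is not visible from the quantities $\max_{I\in\cI_i}x(I)$ or from the profile $(X_1,\ldots,X_5)$ alone. A randomized shift of the windows does not help either: for any shift, the point $2.5$ still lies in two consecutive windows and both maxima equal $1$. In short, the decomposition loses exactly the information needed to go below~$2$, and your ``tightening'' step is where the entire difficulty of the claim resides; what you have written there is intuition, not a plan.

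The paper's proof takes a completely different route that avoids any spatial decomposition. It first writes $\overline\psi=\sum_{\ell=1}^{4}\tau_\ell$, where $\tau_\ell$ is the threshold at which the optimal net size drops below~$\ell$, and bounds $\tau_\ell\le\max_{\cI'\in\mI_\ell}\min_{I\in\cI'}x(I)$ over independent sets $\cI'$ of size~$\ell$. It then observes that, normalized by $x(V)=1$, the worst case of $\sum_\ell\min_{I\in\cI_\ell}x(I)$ over all choices of $\cI_\ell\in\mI_\ell$ is the value of a linear program; the bound $\tfrac{19}{12}$ is obtained by exhibiting an explicit feasible \emph{dual} solution (a fractional weighting of the at most $1+2+3+4=10$ intervals) after a short case analysis on how $I_1$ interacts with $\cI_2,\cI_3,\cI_4$. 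The constant $\tfrac{19}{12}=1+\tfrac13+\tfrac14$ arises directly from that dual assignment, not from any balancing of overlap masses.
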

\begin{proof}
By the continuity property, the size $\psi(\tau)$ of an optimal $\tau$-net cannot be more than $k:=4$. As we increase $\tau$ from $0$ to $1$, $\psi(\tau)$ decreases from $k$ to $0$. For $\ell\in[k]:=\{1,\ldots,k\}$, let $\tau_\ell$ be the smallest value of $\tau$ at which $\psi(\tau)$ drops below $\ell$ (note that some of the $\tau_\ell$'s might be equal); if $\psi(\tau)$ does not drop below $\ell$, set $\tau_\ell:=1$. Set further  $\tau_{k+1}:=0$. Then
\begin{align}\label{s2}
\int_0^1\psi(\tau)\, d\tau=\sum_{\ell=1}^k\int_{\tau_{\ell}}^{\tau_{\ell+1}}\psi(\tau)\, d\tau= \sum_{\ell=1}^k\ell(\tau_{\ell}-\tau_{\ell+1})=\sum_{\ell=1}^k\tau_{\ell}.
\end{align}
Let $\mI_\ell$ be the family of all independent sets (that is, subsets of pairwise-disjoint intervals) in $\cI$ of size $\ell$.
Suppose that $\psi(\tau)=\ell$. Then there exists an $\cI'\in\mI_\ell$  such that $x(I)\ge\tau$ for all $I\in\cI'$. This implies that $\tau_\ell\le\max_{\cI'\in\mI_\ell}\min_{I\in\cI'}x(I)$. It follows from \raf{s2} that
\begin{align}\label{s3}
\overline\psi(V,\cI,x)\le\sum_{\ell=1}^k\max_{\cI'\in\mI_\ell}\min_{I\in\cI'}x(I)=\max_{\cI_\ell\in\mI_\ell:\ell\in[k]}\sum_{\ell=1}^k\min_{I\in\cI_\ell}x(I).
\end{align}
By~\raf{s3}, it is enough to show that 
\begin{align}\label{s4}
\sum_{\ell=1}^k\min_{I\in\cI_\ell}x(I)\le \frac{19}{12}x(V), 
\end{align}
for any given families $\cI_\ell\in\mI_\ell$, for $\ell\in[k]$.
To show~\raf{s4}, we fix an arbitrary set of families $\cI_\ell\in\mI_\ell$, for $\ell\in[k]$, and consider the following LP and its dual:

{\centering \hspace*{-18pt}
	\begin{minipage}[t]{.47\textwidth}
		\begin{alignat}{3}
		\label{LP-P}
		\quad& \displaystyle \gamma^* = \max\quad \sum_{\ell=1}^k\alpha_\ell\\
		\text{s.t.}\quad & \displaystyle \sum_{i \in I}x_i\ge \alpha_\ell,\quad\forall I\in\cI_\ell,\quad \forall \ell\in[k],\label{LP-P-1}\\
		\quad&\sum_{i\in V}x_i=1,\label{LP-P-2}\\
		\quad&x_i\ge 0,\quad\forall i\in V,\label{LP-P-3}
		\end{alignat}
	\end{minipage}
	\,\,\, \rule[-25ex]{1pt}{25ex}
	\begin{minipage}[t]{0.47\textwidth}
		\begin{alignat}{3}
		\label{LP-D}
	\quad& \displaystyle \gamma^* = \min \quad\gamma \\
		\text{s.t.}\quad & \displaystyle \sum_{\ell=1}^k\sum_{I\in\cI_\ell:~i\in I}\beta_I\le \gamma,\quad\forall i\in V, \label{LP-D-1}\\
		\quad&\sum_{I\in\cI_\ell}\beta_I=1,\quad\forall\ell\in[k],\label{LP-D-2}\\
		\quad&\beta_I\ge 0,\quad\forall I\in\cI_\ell,\quad\forall\ell\in[k].\label{LP-D-3}
		\end{alignat}
\end{minipage}}

\medskip

\noindent It is enough to show that $\gamma^*\le\frac{19}{12}$; we do this by constructing a feasible dual solution with $\gamma=\frac{19}{12}$. Note that, for any given value of $\beta$ satisfying~\raf{LP-D-2} and~\raf{LP-D-3}, setting $\gamma:=\max_{i\in V}\sum_{I\in\cI_\ell:~i\in I}\beta_I$ gives a feasible dual solution.

Let $\cI_1:=\{I_1\}$, $\cI_2=\{I_2,I_3\}$, $\cI_3=\{I_4,I_5,I_6\}$, and $\cI_4=\{I_7,I_8,I_9,I_{10}\}$, where intervals in a given $\cI_\ell$ are ordered (from left to right) by their left endpoints. For simplicity, we write $\beta_j:=\beta_{I_j}$. We consider a number of cases:
\begin{itemize}
    \item {\bf Case 1}. There exists $I\in\cI_2$ such that $I\cap I_1=\emptyset$. W.l.o.g., $I=I_2$. In this case, we assign $\beta_1=\beta_2:=1$, $\beta_3:=0$, $\beta_4=\beta_5=\beta_6:=\frac13$, and $\beta_7=\beta_8=\beta_9=\beta_{10}:=\frac14$. Then $\gamma\le1+\frac13+\frac14=\frac{19}{12}$. 
 \end{itemize}   
 \noindent   Thus we may assume in the following cases that $I_1\cap I_2\ne\emptyset$ and $I_1\cap I_3\ne\emptyset$. Furthermore, there exists $I\in\cI_3$ such that $I\cap I_1=\emptyset$. W.l.o.g., $I=I_6$ (either $I=I_4$ or $I=I_6$; if $I=I_4$ we get a symmetric case).
\begin{itemize}
     \item {\bf Case 2}. There exists an interval $I\in\cI_4$ such that $I\cap I_1=I\cap I_6=\emptyset$. Then setting $\beta_1:=1$,  $\beta_2=\beta_3:=\frac12$, $\beta_4=\beta_5:=0$, $\beta_6:=1$, $\beta_I:=1$ and $\beta_{I'}:=0$ for all $I'\in\cI_4\setminus\{I\}$,  gives $\gamma\le 1+\frac12=\frac32$.
 \end{itemize}   
 \noindent Thus we may assume next that $I\cap I_1\ne\emptyset$ or $I\cap I_6\ne\emptyset$, for all $I\in\cI_4$. 
 The current set of assumptions imply that $I_7$ and $I_8$ both intersect $I_1$ but are disjoint from $I_6$, while $I_9$ and $I_{10}$ both intersect $I_6$ but are disjoint from $I_1$. These also imply that $I_2$ is disjoint from $I_6$, $I_9$ and $I_{10}$, $I_3$ is disjoint from $I_{10}$, $I_4$ is disjoint from $I_9$, and $I_4$ and $I_5$ are both disjoint from $I_{10}$ (see Figure~\ref{fig:f1}).
      \begin{figure}
    \centering
    \includegraphics[width=6in]{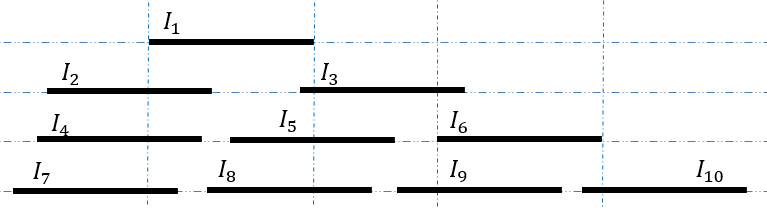}
    \caption{Illustration for the proof of Claim~\ref{cl2}.}
    \label{fig:f1}
\end{figure}
      Note that we may assume other overlaps between intervals that have not been forbidden above (e.g., whether $I_3$ overlaps with $I_6$ or not) are possible, as this would only add more constraints of the form ~\raf{LP-D-1} to the dual LP. To find all such overlaps, we construct a graph $G=([10],E)$ whose vertices are the indices of intervals $I_1,\ldots,I_{10}$, and whose edges are the set of pairs $\{j,k\}$ such that the overlap between $I_j$ and $I_k$ is forbidden. Then the maximal independent sets of $G$ will correspond one-to-one to the set of admissible overlaps. With our current assumptions,
      \begin{align*}
      E=&\big\{
       \{2,3\},\{4,5\},\{4,6\},\{5,6\},\{7,8\},\{7,9\},\{7,10\},\{8,9\},\{8,10\},\{9,10\},\\
       &\{1,6\},\{1,9\},\{1,10\},\{2,6\},\{2,9\},\{2,10\},\{3,10\},\{4,9\},\{4,10\},\{5,10\},\{6,7\},\{6,,8\}
      \big\}.
      \end{align*}
      The set of maximal independent sets in $G$ is
      \begin{align}\label{M}
      \cM=&\big\{\{1,2,4,7\},\{1,2,4,8\},\{1,2,5,7\},\{1,2,5,8\},
      \{1,3,4,7\}\{1,3,4,8\},\{1,3,5,7\},\{1,3,5,8\},\nonumber\\
      &\{3,5,9\},\{3,6,9\},\{6,10\}
      \big\}.
      \end{align}
      Each element $M\in\cM$ corresponds to a constraint of the form ~\raf{LP-D-1}. Thus, the dual-LP can be re-written in this case as follows:
      	\begin{align}\label{LP-DD}
	\quad& \displaystyle \gamma^* = \min \quad\gamma \\
		\text{s.t.}\quad & \displaystyle \sum_{j\in M}\beta_j\le \gamma,\quad\forall M\in\cM, \label{LP-DD-1}\\
		\quad&\beta_1=1,~\beta_2+\beta_3=1,~\beta_4+\beta_5+\beta_6=1,~\beta_7+\beta_8+\beta_9+\beta_{10}=1,\label{LP-DD-2}\\
		\quad&\beta_j\ge 0,\quad\forall j\in[10].\label{LP-DD-3}
		\end{align}
      
      Let us consider the assignment of $\beta$, satisfying~\raf{LP-DD-2} and~\raf{LP-DD-3}:
      $\beta_1:=1$, $\beta_2=\beta_3:=\frac12$, $\beta_4=\beta_5:=\frac1{12}$, $\beta_6:=\frac56$, $\beta_7=\beta_8:=0$, $\beta_9:=\frac14$ and $\beta_{10}:=\frac34$. Then,
      \begin{align*}
      \gamma^*\le \max_{M\in\cM}\sum_{j\in M}\beta_j=\max\{\beta_1+\beta_2+\beta_4+\beta_7,\beta_2+\beta_4+\beta_9,\beta_2+\beta_6+\beta_9,\beta_6+\beta_{10}\}=\max\Big\{\frac{19}{12},\frac{5}{6},\frac{19}{12},\frac{19}{12}\Big\}=\frac{19}{12}.
      \end{align*}

\end{proof}

Note that not all the maximal independent sets in $M$ given in \raf{M} are realizable together as maximal cliques in one configuration of intervals $\{I_1,\ldots,I_{10}\}$. For instance, if both $\{1,2,4,7\}$ and $\{1,2,4,8\}$ appear as maximal cliques in one configuration, then $I_4\cap I_8\neq\emptyset$, implying that $I_5\cap I_7=\emptyset$, and hence, $\{1,2,5,7\}$ and $\{1,3,5,7\}$ cannot be realized in the same configuration.  However, doing more case analysis based on such restrictions does not lead to any improvement of the bound obtained in Claim~\ref{cl2}.

\medskip
\noindent{\it Proof of Lemma~\ref{l2}}. Let $L$ be the smallest line segment containing all the intervals in $\cI$, and consider a $(5,\delta)$-shifted partition of $L$, where $\delta\in[0,5]$ is chosen randomly. We may  assume w.l.o.g. that $|L|$ is a multiple  of $5$. Let us denote by $L_1,\ldots,L_q$ the segments in the partition, and let $\cI_h=\{I\in\cI:~I\subseteq L_h\}$. Let $\cI':=\cI\setminus\bigcup_h\cI_h$ be the set of intervals crossed by the partition. Since each interval in $\cI'$ contains exactly one endpoint of some segment $L_i$, we can define a set $L_1',\ldots,L_{q}'$ of segments of length 2 such that the sets $\cI_h':=\{I\in\cI':~I\subseteq L_h'\}$, for $h=1,\ldots,k$, partition $\cI'$. Let $V_h=V\cap L_h$ and $V_h'=V\cap L_h'$, for $h=1,\ldots,q$. Then
\begin{align}\label{s1}
\overline\psi(V,\cI,x)\le\sum_{h=1}^q\overline{\psi}(V_h,\cI_h,x)+\sum_{h=1}^q\overline{\psi}(V_h',\cI_h',x),
\end{align}
as a $\tau$-net w.r.t. $(V, \cI,x)$ can be obtained by taking the union of two $\tau$-nets w.r.t. $(\cup_hV_h, \cup_h\cI_h, x)$ and $(\cup_hV_h', \cup_h\cI_h', x)$, respectively. By Claim~\ref{cl1}, $\overline{\psi}(V_h',\cI_h',x)\le x(V_h')$, for $h=1,\ldots,k$, and by Claim~\ref{cl2}, $\overline{\psi}(V_h,\cI_h,x)\le \frac{19}{12}x(V_h)$, for $h=1,\ldots,q$. Consider a point $v\in V$. Note that the probability that $v\in V_h'$ for some $h'\in\{1,\ldots,k\}$ is the same as the probability that a segment of length 2 centered at $v$ is crossed by the partition. By the random shifting property, the latter probability is $\frac25$. It follows that $\EE[\sum_{h=1}^kx(V_h')]\le \frac25x(V)$. Let us assume that we have fixed a shifted partition such that $\sum_{h=1}^kx(V_h')\le \frac25x(V)$. Putting 
the above together with \raf{s1} yields
\begin{align*}
\overline\psi(V,\cI,x)&\le\frac{19}{12}\sum_{h=1}^qx(V_h)+\sum_{h=1}^qx(V_h')\le\frac{19}{12}x(V)+\frac25x(V)=\frac{119}{60}x(V).
\end{align*}
The lemma follows.
\qed

The theorem below follows from Lemma~\ref{l2}.
\begin{theorem}\label{t2}
The integrality gap of unweighted continuous $\USS$ is at most $1.98\overline{3}$.
\end{theorem}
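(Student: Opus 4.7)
The plan is to instantiate the threshold rounding scheme that was already sketched in the paragraph preceding Lemma~\ref{l2}, and then use Lemma~\ref{l2} as a black box to bound the expected cost.  Concretely: solve the LP relaxation~\raf{LP-RS}--\raf{LP-RS-2} to obtain $(x^*,y^*)$; draw $\tau\in[0,1]$ uniformly at random and set $\tau_x:=\tau$, $\tau_y:=1-\tau$.  Form the projection sets $\cI_x=\{r_x:r\in\cR\}$ and $\cI_y=\{r_y:r\in\cR\}$; both are collections of \emph{unit} intervals because $\cR$ consists of unit squares.  Let $V_x$ (resp.\ $V_y$) be the set of candidate vertical (resp.\ horizontal) lines, taken dense enough to satisfy the continuity property (e.g.\ by inserting one line per maximal clique of $\cI_x$ or $\cI_y$, as discussed in the preliminaries).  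Compute an optimal $\tau_x$-net $S_x\subseteq V_x$ for $(V_x,\cI_x,x^*)$ and an optimal $\tau_y$-net $S_y\subseteq V_y$ for $(V_y,\cI_y,y^*)$, and return the union of the corresponding vertical and horizontal lines.

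For feasibility, fix any unit square $r\in\cR$.  The LP constraint~\raf{LP-RS-1} gives $x^*(r_x)+y^*(r_y)\ge 1$, and because $\tau_x+\tau_y=1$, at least one of the inequalities $x^*(r_x)\ge\tau_x$ or $y^*(r_y)\ge\tau_y$ must hold.  In the first case $r_x$ is stabbed by some point of the $\tau_x$-net $S_x$, whose vertical line then stabs $r$; the second case is symmetric.  Hence the output is always a feasible cover of $\cR$.

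For the cost bound, the returned solution has cardinality $|S_x|+|S_y|=\psi(V_x,\cI_x,x^*,\tau)+\psi(V_y,\cI_y,y^*,1-\tau)$.  Taking expectation over the uniform $\tau\in[0,1]$ and using the change of variable $\tau\mapsto 1-\tau$ in the second term,
\begin{align*}
\EE\bigl[|S_x|+|S_y|\bigr]
&=\int_0^1\psi(V_x,\cI_x,x^*,\tau)\,d\tau+\int_0^1\psi(V_y,\cI_y,y^*,\tau)\,d\tau\\
&=\overline\psi(V_x,\cI_x,x^*)+\overline\psi(V_y,\cI_y,y^*).
\end{align*}
Lemma~\ref{l2} applies to each term (the continuity property holds by our choice of $V_x,V_y$, and the intervals have unit length), yielding
\[
\EE\bigl[|S_x|+|S_y|\bigr]\le \tfrac{119}{60}\bigl(x^*(V_x)+y^*(V_y)\bigr)=\tfrac{119}{60}\,z^*.
\]
Since $\tfrac{119}{60}=1.98\overline{3}$ and there is a choice of $\tau$ achieving the expectation, the theorem follows; as noted in the remark after Theorem~\ref{t1}, the randomization can be removed by trying only the polynomially many breakpoint values of $\tau$.

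The only genuine obstacle is the first step, namely verifying that the feasibility guarantee $\tau_x+\tau_y=1$ together with~\raf{LP-RS-1} really forces one of the two nets to stab every square, and checking that continuity can be imposed ``for free'' in the unweighted continuous setting; everything else is a one-line application of Lemma~\ref{l2}, which is where all the analytical work has been done.
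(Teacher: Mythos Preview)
Your proposal is correct and follows essentially the same approach as the paper's own proof: pick $\tau_x$ uniformly in $[0,1]$, set $\tau_y=1-\tau_x$, take optimal $\tau_x$- and $\tau_y$-nets on the two projections, and invoke Lemma~\ref{l2} on each side after the change of variable $\tau\mapsto 1-\tau$. The paper's version is terser (it omits the explicit feasibility verification and the discussion of the continuity property, both of which you spell out), but the argument is identical.
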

\begin{proof}
Let $(x^*,y^*)$ be an optimal solution for the LP relaxation~\raf{LP-RS}-\raf{LP-RS-2}, $V_x$ and $\cI_x$ be, respectively, the projections of $\cV$ and $\cR$ on the horizontal axis, and $V_y$ and $\cI_y$ be, respectively, the projections of $\cH$ and $\cR$ on the vertical axis. Clearly, we may think of $x^*$ and $y^*$ as values associated with the points in the sets $V_x$ and $V_y$, respectively. Pick $\tau_x\in[0,1]$ randomly, set $\tau_y=1-\tau_x$,
and let $S_x$ be an optimal $\tau_x$-net w.r.t. $(V_x,\cI_x,x^*)$ and $S_y$ be an optimal $\tau_y$-net w.r.t. $(V_y,\cI_y,y^*)$. These nets correspond to integral solutions $(\widehat x,\widehat y)$ to the given instance of \USS\ of expected value 
\begin{align*}
\EE[|S_x|]+\EE[|S_y|]&=\int_0^1 \psi(V_x,\cI_x,x^*,\tau)\,d\tau+\int_0^1 \psi(V_y,\cI_y,y^*,1-\tau)\,d\tau\\
&=\int_0^1 \psi(V_x,\cI_x,x^*,\tau)\,d\tau+\int_0^1 \psi(V_y,\cI_y,y^*,\tau)\,d\tau\\
&=\overline \psi(V_x,\cI_x,x^*)+\overline \psi(V_y,\cI_y,y^*).
\end{align*}
By Lemma~\ref{l2}, $\overline \psi(V_x,\cI_x,x^*)\le\frac{119}{60}\sum_{i\in\cV}x_i^*$ and $\overline \psi(V_y,\cI_y,y^*)\le\frac{119}{60}\sum_{j\in\cH}y_j^*$. The theorem follows.

\end{proof}

The proof of Lemma~\ref{l2} actually gives that $\overline\psi(V,\cI,x)\le\big(\alpha+\frac2k\big)$, whenever a bound of $\alpha\cdot x(V)$ can be proved for unit intervals defined on $[0,k]$. Thus, using the following version of Claim~\ref{cl2}, whose proof is given in the appendix, we get a bound of $1.9\overline3$ on the integrality gap of unweighted continuous $\USS$.

\begin{claim}\label{cl3}
Let $V\subseteq[0,6]$ be a given set of points on the line with non-negative values $x_v$, for $v\in V$, and $\cI$ be set of unit intervals on $[0,6]$ satisfying the continuity property w.r.t. $V$. Then $\overline\psi(V,\cI,x)\le \frac{8}{5} x(V)$.
\end{claim}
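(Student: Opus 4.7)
\noindent\textit{Proof proposal for Claim~\ref{cl3}.}
The plan is to mirror the proof of Claim~\ref{cl2}, performing a case analysis that culminates in a feasible solution of value $\tfrac{8}{5}$ for the dual LP analogous to \raf{LP-D}--\raf{LP-D-3}. First, I would observe that by the continuity property, any independent family of pairwise-disjoint unit intervals inside $[0,6]$ has size at most $k:=5$, so that $\psi(\tau)$ drops from $5$ to $0$ as $\tau$ moves from $0$ to $1$. Repeating the derivation that gave \raf{s3} (replacing $k=4$ by $k=5$), it suffices to show that for any arbitrarily chosen families $\cI_\ell\in\mI_\ell$, $\ell\in[5]$, one has $\sum_{\ell=1}^5\min_{I\in\cI_\ell}x(I)\le \tfrac{8}{5}x(V)$. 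Labeling the intervals $I_1; I_2,I_3; I_4,I_5,I_6; I_7,\ldots,I_{10}; I_{11},\ldots,I_{15}$ in left-to-right order inside each family, this reduces to finding $\beta_1,\ldots,\beta_{15}\ge 0$ that satisfy $\beta_1=1$, $\beta_2+\beta_3=1$, $\beta_4+\beta_5+\beta_6=1$, $\beta_7+\beta_8+\beta_9+\beta_{10}=1$, $\beta_{11}+\cdots+\beta_{15}=1$, and $\sum_{j\in M}\beta_j\le \tfrac{8}{5}$ for every maximal clique $M$ in the corresponding overlap graph.

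Next I would reuse the structural reductions from Claim~\ref{cl2}. If some $I\in\cI_2$ is disjoint from $I_1$, or some $I\in\cI_3$ is disjoint from $I_1$, or some $I\in\cI_4$ is disjoint from both $I_1$ and the rightmost interval of $\cI_3$, then the pattern from Cases~1 and~2 of Claim~\ref{cl2}, padded by setting the $\cI_5$-weights uniformly at $\tfrac{1}{5}$, already yields a dual of value at most $1+\tfrac{1}{3}+\tfrac{1}{5}=\tfrac{23}{15}<\tfrac{8}{5}$ (details to be verified in each subcase). So the only remaining situation, call it the \emph{tight case}, is the one left at the end of the proof of Claim~\ref{cl2}: we may assume $I_1$ overlaps $I_2,I_3,I_4,I_5$; $I_6$ is disjoint from $I_1$; and each of $I_7,\ldots,I_{10}$ either meets $I_1$ or meets $I_6$, giving the overlap picture in Figure~\ref{fig:f1}. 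On top of this, I must branch on how the five intervals $I_{11},\ldots,I_{15}$ of $\cI_5$ distribute between the ``left block'' (intervals meeting $I_1$) and the ``right block'' (intervals meeting $I_6$), and on whether any of them avoids both blocks.

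The third step is the core: in the tight case, I would assign weights close to those used in Claim~\ref{cl2} (in particular $\beta_1=1$, $\beta_2=\beta_3=\tfrac{1}{2}$, $\beta_6=\tfrac{5}{6}$, $\beta_4=\beta_5=\tfrac{1}{12}$, $\beta_9=\tfrac{1}{4}$, $\beta_{10}=\tfrac{3}{4}$, $\beta_7=\beta_8=0$) and then distribute the unit of $\cI_5$-weight over $\beta_{11},\ldots,\beta_{15}$ so that no maximal clique total exceeds $\tfrac{8}{5}$. Since the existing maximal cliques in \raf{M} already attain the value $\tfrac{19}{12}$ in Claim~\ref{cl2}, there is only a slack of $\tfrac{8}{5}-\tfrac{19}{12}=\tfrac{1}{60}$ when augmenting them by one more vertex of $\cI_5$. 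So I would place most of the $\cI_5$-weight on interval(s) of $\cI_5$ that are disjoint from the three ``heavy'' cliques $\{1,2,4,7\},\{2,6,9\},\{6,10\}$, and spread only a small amount ($\le \tfrac{1}{60}$ each) on those intervals of $\cI_5$ that enlarge such a clique. Enumerating (via the overlap graph enlarged to $15$ vertices) the maximal cliques created by intervals of $\cI_5$ and checking the constraints would complete the argument.

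The main obstacle will be the combinatorial explosion: extending the overlap graph $G$ to $15$ vertices multiplies the number of maximal cliques to verify, and I need a uniform assignment of $\beta_{11},\ldots,\beta_{15}$ that works across all residual sub-cases of the tight case simultaneously. I expect the cleanest route is to identify, for each sub-case, a single ``anchor'' interval in $\cI_5$ that lies in the gap between $I_1$ and $I_6$ (or farthest from both), put weight $1$ on it, and verify by direct inspection of Figure~\ref{fig:f1} (augmented with the $\cI_5$-intervals) that every maximal clique picks up at most $\tfrac{8}{5}$; if no single such anchor exists in a given configuration, the geometric constraints inside $[0,6]$ force a disjointness that lets me re-apply the simpler Case~1/Case~2 style weighting. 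As noted after the proof of Claim~\ref{cl2}, not all maximal independent sets of the extended graph are jointly realizable by actual unit intervals, but the proof only needs the LP bound, so these extraneous cliques can be included harmlessly.
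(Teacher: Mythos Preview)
Your overall framework---reduce to the dual LP with $k=5$ and do a case analysis as in Claim~\ref{cl2}---is exactly what the paper does. However, the two concrete weight assignments you propose both fail.

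For the ``easy'' reductions you pad the Case~1/Case~2 assignment of Claim~\ref{cl2} by putting $\tfrac15$ uniformly on $\cI_5$ and claim this gives $1+\tfrac13+\tfrac15=\tfrac{23}{15}$. But you have dropped the level-4 contribution: the Case~1 assignment already contributes $\tfrac14$ at every point, so the bound becomes $1+\tfrac13+\tfrac14+\tfrac15=\tfrac{107}{60}>\tfrac85$. The paper handles this case differently: it observes that since $|\cI_5|=5$ and $I_1,I_2$ are two disjoint unit intervals, some $I\in\cI_5$ misses both; it then puts $\beta_I:=1$ (and $0$ on the rest of $\cI_5$), keeping levels~3 and~4 uniform at $\tfrac13$ and $\tfrac14$, which gives $1+\tfrac13+\tfrac14=\tfrac{19}{12}<\tfrac85$.

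The gap in your tight case is more serious. You propose to keep the Claim~\ref{cl2} weights on levels~1--4 and put essentially all of the $\cI_5$-mass on a single anchor $J$ disjoint from $I_1$ and $I_6$. But $\cM$ in~\raf{M} contains the clique $\{3,5,9\}$, whose current weight is $\tfrac12+\tfrac1{12}+\tfrac14=\tfrac56$, and it is geometrically possible for an interval $J$ lying entirely in the gap between $I_1$ and $I_6$ to overlap $I_3\cap I_5\cap I_9$ (whenever $I_3\cap I_9\neq\emptyset$, which is allowed). Then $\{J,3,5,9\}$ is a clique of weight $\tfrac56+\beta_J$, which exceeds $\tfrac85$ as soon as $\beta_J>\tfrac{23}{30}$. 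Since at most four of the five intervals in $\cI_5$ can touch $I_1\cup I_6$, your ``$\le\tfrac1{60}$ each elsewhere'' budget forces $\beta_J\ge 1-\tfrac{4}{60}=\tfrac{14}{15}>\tfrac{23}{30}$, so the plan cannot be rescued by spreading. In short, the $\tfrac{19}{12}$ profile from Claim~\ref{cl2} simply leaves no room for a fifth level.

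The paper's remedy is to \emph{change} the level-3 weights before touching levels~4 and~5: it sets $\beta_4=\beta_5:=\tfrac1{10}$, $\beta_6:=\tfrac45$ (so that the level~1--3 depth at any point inside $I_1$ is already exactly $\tfrac85$), then zeroes out every interval of $\cI_4\cup\cI_5$ that meets $I_1$, and finally branches on whether $I_3\cap I_6\neq\emptyset$ (with further subcases), assigning the level-4 and level-5 weights jointly in each branch. The key difference from your plan is that levels~3,~4,~5 are balanced \emph{together} rather than frozen at their Claim~\ref{cl2} values.
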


\paragraph{Limitation of the analysis.} Figure~\ref{fig:LB} shows a set of families $\cI_\ell$, $\ell=1,\ldots,k:=6$, of independent sets. We study this example for any $k\ge 4$. Let us assume that intervals in level $\ell$ are numbered from left to right as: $I_{\ell,1},\ldots,I_{\ell,\ell}$. 
Let 
\begin{align*}
\psi_1&:=\frac{1+\sqrt 5}{2},\qquad \psi_2:=\frac{1-\sqrt 5}{2},\\
c_1&:=\frac{7-4\psi_2}{\psi_1-\psi_2},\quad c_2:=\frac{4\psi_1-7}{\psi_1-\psi_2},\quad d_1:=\frac{1-\frac12\psi_2}{\psi_1-\psi_2},\quad d_2:=\frac{\frac12\psi_1-1}{\psi_1-\psi_2},\\
A_\ell&:=c_1\psi_1^\ell+c_2\psi_2^\ell-1, \quad B_\ell:=d_1\psi_1^\ell+d_2\psi_2^\ell-\frac12, \qquad\text{ for }\ell=0,1,2,\ldots,k-4,\\
\alpha_3(k)&:=\frac{B_{k-3}}{A_{k-3}},\qquad \gamma^*(k):=\alpha(k)+1.5,\\
\alpha_{\ell}(k)&:=A_{\ell-4}\alpha_3(k)-B_{\ell-4}\qquad\text{ for }\ell=0,1,2,\ldots,k.\\
\end{align*}
It can be verified that there is an optimal solution to the dual LP~\raf{LP-D}-\raf{LP-D-3} satisfying the following properties, where we write $\beta_{\ell,j}(k):=\beta_{I_{\ell,j}}$:
\begin{itemize}
    \item [(i)] level 1: $\beta_{1,1}(k)=1$; 
    \item[(ii)] level 2: $\beta_{2,1}(k)=\beta_{2,2}(k)=\frac12$;
    \item[(iii)] level 3: $\beta_{3,1}(k)=\beta_{3,2}(k)=\alpha_3(k)$, and $\beta_{3,3}(k):=1-2\alpha_3(k)$;
    \item[(iii)] level $\ell\in\{4,\ldots,k\}$: $\beta_{\ell,j}=0$ for $j=1,\ldots,\ell-2$, $\beta_{\ell,\ell-1}:=\alpha_\ell(k)$, and $\beta_{\ell,\ell}=1-\alpha_\ell(k)$.
\end{itemize}
\begin{figure}
    \centering
    \includegraphics[width=6in]{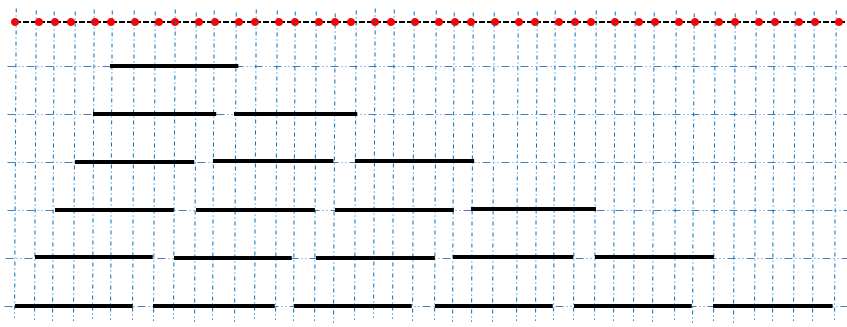}
    \caption{Lower bound example for the dual LP~\raf{LP-D}-\raf{LP-D-3}}
    \label{fig:LB}
\end{figure}
The corresponding optimal objective has value $\gamma^*:=\gamma^*(k)$. The first two values $\gamma^*(4)=\frac{19}{12}$ and $\gamma^*(5)=\frac54$ match the bounds given in Claims~\ref{cl2} and~\ref{cl3}, respectively. It can be easily verified that $\gamma^*(\infty)=\lim_{k\to\infty}\gamma^*(k)=\frac{35-\sqrt{5}}{20}\approx1.6382$. We conjecture that, for any $k\ge 1$ and any families of independent sets $\cI_1,\ldots,\cI_k$, the optimal value of the dual LP~\raf{LP-D}-\raf{LP-D-3} is at most $\gamma^*(\infty)$.\\

{\bf Remark.} Note that the above discussion does not imply any lower bound on the integrality gap of the standard relaxation of \USS\ or on the approximation factor of our rounding algorithm. Currently the best lower bound known on the intergrality gap is $3/2$ as shown by the following instance: $\cH$ and $\cV$  consist of two lines each, and for each pair of lines in $\cH \cup \cV$, there is a square intersecting exactly those two. Then, a fractional solution that assigns $0.5$ to each line in $\cH \cup \cV$ is a valid solution to the LP-relaxation with value $2$. On the other hand, the optimal solution must pick at least three of the lines. 

\section{Conclusions}
While we have been able to improve the upper bound on the integrality gap of two special cases of the rectangle stabbing problem, it seems clear that the upper bounds obtained are not tight. As we have pointed out, our analysis techniques cannot be used to obtain significantly better bounds. It seems that better and more general technique are necessary to make substantial progress on this. For the \USS\ problem, it is still open whether the standard LP relaxation of the weighted and/or discrete versions have integrality gap less than $2$. It is worth noting that if our conjecture in the previous section is true, then we would obtain an integrality gap of $\approx 1.6382$ for the standard LP-relaxations of both the discrete and continuous unweighted versions of \USS. Finally, obtaining an approximation factor better than 2 in polynomial time for the general rectangle stabbing problem remains a {\em very} interesting open question. 


\appendix

\section{Proof of Claim~\ref{cl3}}
\begin{proof}
We proceed as in Claim~\ref{cl1}: we set $k=5$ and consider the dual LP~\raf{LP-D}-\raf{LP-D-3}. Again, it is enough to show that $\gamma^*\le\frac{8}{5}$; we do this by constructing a feasible dual solution with $\gamma=\frac{8}{5}$. Note that, for any given value of $\beta$ satisfying~\raf{LP-D-2} and~\raf{LP-D-3}, setting $\gamma:=\max_{i\in V}\sum_{I\in\cI_\ell:~i\in I}\beta_I$ gives a feasible dual solution.

Let $\cI_1:=\{I_1\}$, $\cI_2=\{I_2,I_3\}$, $\cI_3=\{I_4,I_5,I_6\}$, $\cI_4=\{I_7,I_8,I_9,I_{10}\}$, and $\cI_5:=\{I_{11},I_{12},I_{13},I_{14},I_{15}\}$, where intervals in a given $\cI_\ell$ are ordered by their left endpoints. For simplicity, we sometimes write $\beta_j:=\beta_{I_j}$. We consider a number of cases:
\begin{itemize}
    \item {\bf Case 1}. There exists $I\in\cI_2$ such that $I\cap I_1=\emptyset$. W.l.o.g., $I=I_2$. As $|\cI_5|=5$, there also exists $I\in\cI_5$ such that $I\cap I_1=\emptyset$ and $I\cap I_2=\emptyset$. In this case, we assign $\beta_1=\beta_2=\beta_I:=1$, $\beta_3:=0$, $\beta_4=\beta_5=\beta_6:=\frac13$, and $\beta_7=\beta_8=\beta_9=\beta_{10}:=\frac14$, and $\beta_{I'}=0$ for $I'\in\cI_5\setminus\{I\}$. Then $\gamma\le1+\frac13+\frac14=\frac{19}{12}<\frac85$. 
\end{itemize} 

    \noindent Thus we may assume in the following cases that $I_1\cap I_2\ne\emptyset$ and $I_1\cap I_3\ne\emptyset$. Furthermore, there exists $I\in\cI_3$ such that $I\cap I_1=\emptyset$. W.l.o.g., $I=I_6$ (either $I=I_4$ or $I=I_6$; if $I=I_4$ we get a symmetric case).
    
    In the following, we fix $\beta_1:=1$, $\beta_2=\beta_3:=\frac12$, $\beta_4=\beta_5:=\frac1{10}$, and $\beta_6:=\frac45$. We will argue that it is always possible to complete this assignment of the dual variables such that \raf{LP-D-2} holds and $\gamma\le\frac85$. 
    In fact, we will argue this can be done without assigning positive values to more than two variables at level $4$ (i.e., corresponding to intervals in $\cI_4$) and no more than three variables at level $5$. 
    We begin by assigning $\beta_I:=0$ for all $I\in\cI_4\cup\cI_5$ such that $I\cap I_1\ne\emptyset$. Note that at most {\it two} variables at level $4$ and at most two variables at level $5$ get assigned $0$ this way.
    
    For a given point $i\in V$, we denote by $$d(i):=\sum_{I~:~\beta_I\text{ has been assigned },~i\in I}\beta_I$$ the {\it current depth} of $i$. For an interval $I$, define further $d(I):=\max_{i\in I}d(i)$, and call it the current depth of $I$. We also define the {\it conservative} depth of $i\in V$, denoted by $\bar d(i)$, as an upper bound on the depth of $i$, implied by the current assignment (which takes into consideration the different possible locations of the intervals at levels 1, 2 and 3).
\begin{itemize}
    \item {\bf Case 2}. $I_3\cap I_6\ne\emptyset$. 
    The current configuration and partial assignment imply that 
    there exist at most one {\it unassigned} interval $I\in\cI_4$ and one {\it unassigned} interval $J\in\cI_5$ such that $I\cap I_3\cap I_6\neq\emptyset$, and $J\cap I_3\cap I_6\neq\emptyset$. (If there are two intervals, say in $\cI_4$, that overlap with $I_3\cap I_6$, then, by the assumptions that $I_1\cap I_3\neq\emptyset$ and $I_3\cap I_6\ne\emptyset$, the left-most one of these two intervals would overlap with $I_1$  and hence its corresponding dual variable would have already been set to $0$.) 
    Observe that, up to this point, we have at least {\it two} unassigned intervals $I,I'\in\cI_4$, and at least {\it three} unassigned intervals $J,J',J''\in\cI_5$. Among the intervals $I,I'$ (resp., $J,J',J"$), if there is one overlapping with $I_3\cap I_6$, we assume it is $I$ (resp., $J)$. It follows also from our current assumptions that $I_3$ does not overlap with any of the intervals $I'$, $J'$ and $J''$, and that either one of the latter two intervals, say $J'$, does not overlap with any of $I_1$, $I_3$ and $I_6$. Figure~\ref{fig:f2-1} shows the current configuration and partial assignment.
    \begin{figure}
        \centering
        \includegraphics[width=6in]{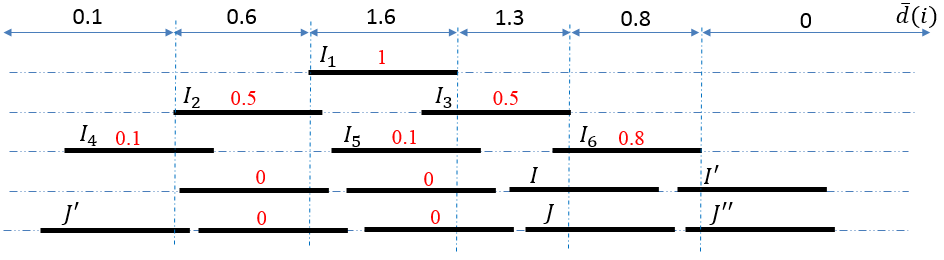}
        \caption{Partial assignment before considering Case 2 in Claim~\ref{cl2}. Numbers in red indicate the current partial assignment on the (variables corresponding to the) intervals. Numbers on the top indicate the (current) conservative depth $\bar d(i)$; for instance, if any of the assigned intervals, say $I_5$, is moved left or right, the current depth at any point remains within the conservative depth.}
        \label{fig:f2-1}
    \end{figure}
    We consider two further cases:
    \begin{itemize}
        \item {\bf Case 2.1}. $I\cap I_6\ne\emptyset$ (Figure~\ref{fig:f2-2}). Then, it is easy to verify that $d(I)\le 1.3$, $d(I')\le0.8$, $d(J)\le 1.3$, $d(J')\le 0.7$ and $d(J'')\le 1.5$. In this case, setting $\beta_I:=0.3$, $\beta_{I'}:=0.7$, $\beta_{J}:=0$, $\beta_{J'}:=0.9$, and $\beta_{J''}:=0.1$, would give $\gamma\le1.6$. 
        \begin{figure}
        \centering
        \includegraphics[width=6in]{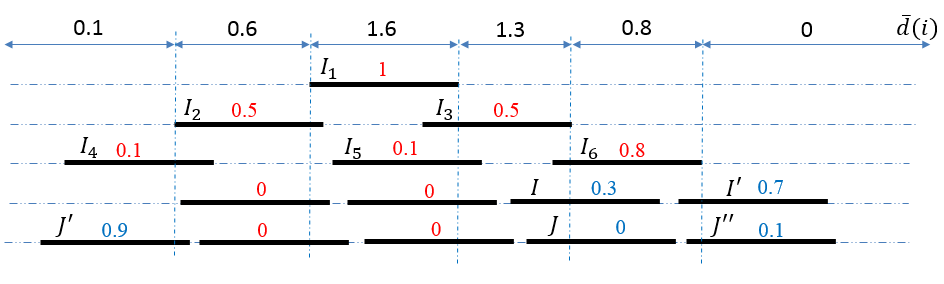}
        \caption{Assignment for Case 2.1 in Claim~\ref{cl2}. Numbers in red indicate the current partial assignment on the (variables corresponding to the) intervals. Numbers on the top indicate the (current) conservative depth $\bar d(i)$. Numbers in blue indicate the completed assignment.}
        \label{fig:f2-2}
    \end{figure}
    \item {\bf Case 2.2}. $I\cap I_6=\emptyset$ (Figure~\ref{fig:f2-3}). Then, it is easy to verify that $d(I)\le 0.6$, $d(I')\le0.8$, $d(J)\le 1.3$, $d(J')\le 0.6$ and $d(J'')\le 0.8$. In this case, setting $\beta_I:=0.5$, $\beta_{I'}:=0.5$, $\beta_{J}:=0.3$, $\beta_{J'}:=0.4$, and $\beta_{J''}:=0.3$, would give $\gamma\le1.6$. 
    \begin{figure}
        \centering
        \includegraphics[width=6in]{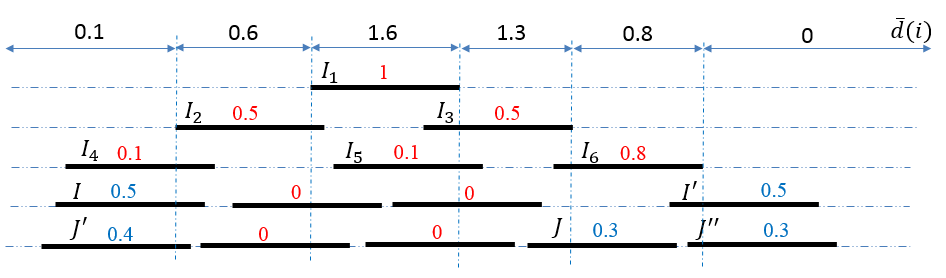}
        \caption{Assignment for Case 2.2 in Claim~\ref{cl2}.}
        \label{fig:f2-3}
    \end{figure}
    \end{itemize}
     \item {\bf Case 3}. $I_3\cap I_6=\emptyset$. Up to this point, we have at least {\it two} unassigned intervals $I,I'\in\cI_4$, and at least {\it three} unassigned intervals $J,J',J''\in\cI_5$. Among the latter three intervals, at least two, say $J',J''$ overlap with neither $I_1$ nor $I_3.$ We consider two cases:
     \begin{itemize}
         \item {\bf Case 3.1}. Either $J'$ or $J''$ does not overlap with $I_6$. W.l.o.g., $J'\cap I_6=\emptyset$ (Figure~\ref{fig:f3-1}). Then, it is easy to verify that $d(I)\le 0.8$, $d(I')\le0.8$, $d(J)\le 0.8$, $d(J')\le 0.8$ and $d(J'')\le 0.6$, and thus, setting $\beta_I=\beta_{I'}:=0.5$, $\beta_{J}=0.2$, $\beta_{J'}:=0.5$, and $\beta_{J''}:=0.3$, would give $\gamma\le1.6$. 
         \begin{figure}
        \centering
        \includegraphics[width=6in]{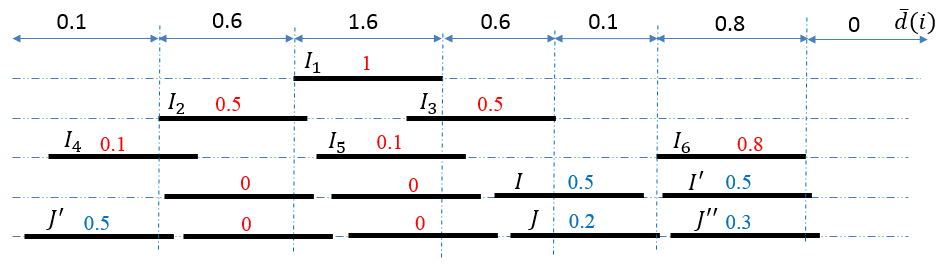}
        \caption{Assignment for Case 3.1 in Claim~\ref{cl2}.}
        \label{fig:f3-1}
        \end{figure}
         \item {\bf Case 3.2}. Both $J'$ and $J''$  overlap with $I_6$. Then $J$ does not overlap with $I_6$. Then, it is easy to verify that $d(I)\le 0.8$, $d(I')\le0.8$, $d(J)\le 0.6$, $d(J')\le 0.8$ and $d(J'')\le 0.8$, and thus, setting $\beta_I=0.4$, $\beta_{I'}:=0.6$, $\beta_{J}=0.6$, $\beta_{J'}=\beta_{J''}:=0.2$, would give $\gamma\le1.6$. 
    \begin{figure}
        \centering
        \includegraphics[width=6in]{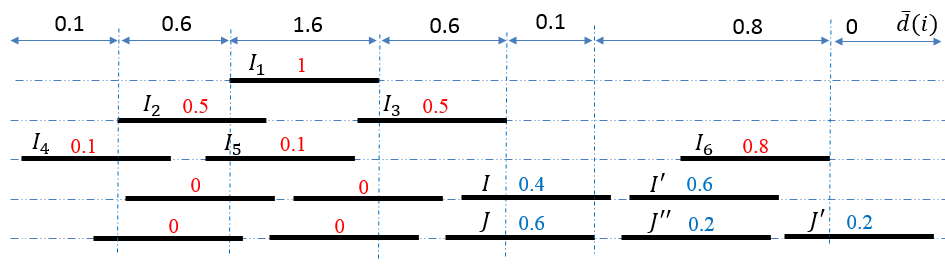}
        \caption{Assignment for Case 3.2 in Claim~\ref{cl2}.}
        \label{fig:f3-2}
    \end{figure}
    
     \end{itemize}

 \end{itemize}   
\end{proof}

\end{document}